\newtheorem{theorem}{\bf Theorem}[section]
\newtheorem{corollary}[theorem]{\bf Corollary}
\newtheorem{lemma}[theorem]{\bf Lemma}
\newtheorem{conjecture}[theorem]{\bf Conjecture}
\newtheorem{definition}[theorem]{\bf Definition}
\newtheorem{remark}[theorem]{\bf Remark}
\DeclareMathOperator{\image}{Im}
\begin{document}

\title{An algorithm to explore entanglement in small systems}

\author{R. Reuvers$^{1}$}

\address{$^{1}$Department of Applied Mathematics and Theoretical Physics (DAMTP), Centre for Mathematical Sciences, University of Cambridge, Wilberforce Road, Cambridge CB3 0WA, United Kingdom}

\subject{Quantum physics}

\keywords{algorithm, entanglement, Schmidt norms, fermionic reduced density matrices, varieties of pure quantum states, minimal output entropy}

\corres{Robin Reuvers\\
\email{r.reuvers@damtp.cam.ac.uk}}

\begin{abstract}
A quantum state's entanglement across a bipartite cut can be quantified with entanglement entropy or, more generally, Schmidt norms. Using only Schmidt decompositions, we present a simple iterative algorithm to maximize Schmidt norms. Depending on the choice of norm, the optimizing states maximize or minimize entanglement, possibly across several bipartite cuts at the same time and possibly only among states in a specified subspace.  

Recognizing that convergence but not success is certain, we use the algorithm to explore topics ranging from fermionic reduced density matrices and varieties of pure quantum states to absolutely maximally entangled states and minimal output entropy of channels.
\end{abstract}


\begin{fmtext}
\section{Introduction}
As a consequence of the singular-value decomposition, a normalized state $\ket{\psi}$ in a bipartite Hilbert space $\mathcal{H}_A\otimes\mathcal{H}_B$ with finite dimensions $d_A, d_B$ and $n_s:=\min(d_A,d_B)$ can be written as
\begin{equation}
\label{Schmidt}
\ket{\psi}=\sum^{n_s}_{i=1} \lambda^\psi_i \ket{\psi^i_A} \otimes \ket{\psi^i_B},
\end{equation}
with \textit{Schmidt coefficients} $\lambda^\psi_1\geq\dots\geq\lambda^\psi_{n_s}\geq0$ satisfying $\sum_i(\lambda^\psi_i)^2=1$ and \textit{Schmidt vectors} $\ket{\psi^i_A}\in\mathcal{H}_A$, $\ket{\psi^i_B}\in\mathcal{H}_B$. 
The sequence of Schmidt coefficients and its related \textit{entanglement spectrum} $\xi_i=-\log(\lambda_i^2)$ characterize entanglement \cite{Ekert} and aid the study of condensed matter systems \cite{LiHaldane}. Often, such systems satisfy symmetry constraints that are imprinted on the Schmidt coefficients---think of the Haldane phase of the $S=1$ spin chain \cite{Pollmann} and symmetries for bosons and fermions. 
\end{fmtext}
\maketitle

The goal of this paper is to point out that the structure of the Schmidt decomposition allows for an easy iteration with which we can explore some of the constraints imposed by symmetries.
Concretely, the procedure attempts to maximize norms (see \cite{Johnston1} for $p=2$), that we refer to as \textit{Schmidt norms},
\begin{equation}
\label{Q}
\|\ket\psi\|_{p,k}:=\left(\sum^k_{i=1} (\lambda^\psi_i)^{p}\right)^{1/p}\ \ \ \ \ \ \ \ \ \ \ \ \ (p\geq1,\ k\leq n_s),
\end{equation}
over pure states $\ket{\psi}$ in a specified subspace $U\subset\mathcal{H}_A\otimes\mathcal{H}_B$. Though they may seem rather abstract, these norms are relevant to a wide range of problems. We believe this can most effectively be demonstrated with examples and shall do so in Section \ref{apps}.

The first sections of this paper are fully general: Section \ref{main} states the algorithm and its goal in the most general terms; Section \ref{moregoal} motivates the relevance of the Schmidt norms by linking them to entanglement; Section \ref{basics} contains the basic properties of the algorithm. We then move on to specific examples in Section \ref{apps}. These demonstrate the scope of the algorithm and motivate why it is often interesting to maximize the norms \eqref{Q}: Section \ref{fermions} looks at fermionic antisymmetry; later sections discuss more mathematical applications such as absolutely maximally entangled states and perfect tensors (\ref{AME}), varieties of pure quantum states (\ref{varieties}) and minimal output entropy of channels (\ref{channels}). We hope that the reader will select one or more of these topics to see what the algorithm can do in a concrete setting---we refer to the conclusion (Section \ref{concl}) for a summary. 

\section{Set-up and algorithm}
\label{main}
This section states the algorithm and its goal.
\subsection{Goal: maximizing Schmidt norms}
\textit{Given a subspace $U\subset\mathcal{H}_A\otimes\mathcal{H}_B$, an integer $k\geq1$ and $p\geq1$, find approximate maximizers of}
\begin{equation}
\label{goal}
\sup_{\substack{\ket{\psi}\in U \\ \|\ket\psi\|=1}} \|\ket\psi\|_{p,k} =\sup_{\substack{\ket{\psi}\in U \\ \|\ket\psi\|=1}}\  \left(\sum^k_{i=1} (\lambda^\psi_i)^{p}\right)^{1/p}.
\end{equation}

Note that this is not a convex optimization problem because we are maximizing, rather than minimizing, a (convex) norm. We refer to these norms as \textit{Schmidt norms}, see Section \ref{Schmidtnormssect} for basic properties. The vector norm $\|.\|$ equals $\|.\|_{2,n_s}$.

\subsection{The algorithm}
\label{algo}
Let $P$ be the orthogonal projection onto the subspace $U$, i.e.\ $P^2=P$, $P^\dagger=P$ and $\image(P)=U$.
\textit{\begin{enumerate}
\item Pick any initial non-zero $\ket{\psi}\in U$.
\item Schmidt decompose $\ket{\psi}$, see \eqref{Schmidt}.
\item Define the normalized vector
\begin{equation}
\label{iteration}
\ket{\phi}:=\frac{P\left(\sum^k_{i=1} (\lambda^\psi_i)^{p-1} \ket{\psi^i_A} \otimes \ket{\psi^i_B}\right)}{\left\|P\left(\sum^k_{i=1} (\lambda^\psi_i)^{p-1} \ket{\psi^i_A} \otimes \ket{\psi^i_B}\right)\right\|}.
\end{equation}
\item Redefine $\ket{\psi}:=\ket{\phi}$ and repeat from $(ii)$.
\end{enumerate}}

In words, the idea is to use repeated Schmidt decompositions while prioritizing the Schmidt coefficients according to the power $p$ and forcing the state to be in the desired subspace.

In Section \ref{secconv}, we prove that $\|\ket\psi\|_{p,k}$ increases with each step and that it converges to a fixed point that may or may not be the global maximum. Section \ref{genr} describes a generalization to several bipartite cuts. The other subsections of \ref{basics} discuss basic properties and can mostly be read independently. It is also possible to skip ahead to the applications in Section \ref{apps}.

\section{Schmidt norms and entanglement}
\label{moregoal}
This section provides some background on the Schmidt norms and explains their relation with entanglement.

\subsection{Schmidt norms}
\label{Schmidtnormssect}
The norms \eqref{Q} for $p=2$ were first studied in quantum information theory in \cite{Johnston1}. More generally, a Schmidt norm \eqref{Q} is an operator norm if $\ket\psi$ in \eqref{Schmidt} is regarded as a map $\mathcal{H}_B\to\mathcal{H}_A$ defined by $\sum_{i} \lambda^\psi_i \ket{\psi^i_A}\bra{\psi^i_B}$. It is the \textit{Schatten $p$-norm} if $k=n_s$ and the \textit{Ky Fan $k$-norm} if $p=1$ \cite{Bhatia}. Of course, we can use well-known properties and results for these norms, such as the one below.

\begin{lemma}[for example \cite{Bhatia}, Section IV.2]
For $k\leq n_s$ and $p\geq1$, the quantities \eqref{Q} are norms that are invariant under unitaries acting on $\mathcal{H}_A$ and $\mathcal{H}_B$.
\end{lemma}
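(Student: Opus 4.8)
The plan is to prove two things: first, that each quantity $\|\ket\psi\|_{p,k}$ in \eqref{Q} is genuinely a norm on the space of maps $\mathcal H_B\to\mathcal H_A$ (equivalently on $\mathcal H_A\otimes\mathcal H_B$ under the identification described in the text), and second, that it is invariant under the local unitary action $\ket\psi\mapsto (U_A\otimes U_B)\ket\psi$. I would carry out both by reducing to standard facts about singular values, since the excerpt has already identified \eqref{Q} with a Schatten or Ky Fan norm in the extreme cases $k=n_s$ or $p=1$; the real content is the intermediate regime $p>1$, $k<n_s$.

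\medskip

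\noindent\textbf{Unitary invariance.} I would treat this first because it is the easier half and because it is needed to make the norm statement well-defined. Writing $\ket\psi$ as the operator $M=\sum_i\lambda_i^\psi\ket{\psi_A^i}\bra{\psi_B^i}$, a local unitary $U_A\otimes U_B$ sends $M$ to $U_A M U_B^\dagger$. The Schmidt coefficients $\lambda_i^\psi$ are exactly the singular values of $M$, and singular values are unchanged by left and right multiplication by unitaries: if $M=W\Sigma V^\dagger$ is a singular-value decomposition, then $U_A M U_B^\dagger=(U_AW)\Sigma(U_BV)^\dagger$ is again an SVD with the same $\Sigma$. Since \eqref{Q} is a function of the $\lambda_i^\psi$ alone, invariance follows immediately. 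This step has no real obstacle.

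\medskip

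\noindent\textbf{The norm axioms.} Here the nontrivial axiom is the triangle inequality; positivity, vanishing only at $0$, and absolute homogeneity are straightforward since $\|c\ket\psi\|_{p,k}=|c|\,\|\ket\psi\|_{p,k}$ (scaling $M$ scales all singular values by $|c|$) and $\|\ket\psi\|_{p,k}=0$ forces all $\lambda_i^\psi=0$, hence $M=0$. For the triangle inequality I would invoke the theory of \emph{symmetric gauge functions} and Ky Fan's maximum principle. The map $M\mapsto\|M\|_{p,k}$ is obtained by applying the symmetric gauge function $g(x)=\bigl(\sum_{i=1}^k x_{[i]}^p\bigr)^{1/p}$ (where $x_{[1]}\geq x_{[2]}\geq\cdots$ is the decreasing rearrangement) to the singular-value vector of $M$. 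A classical theorem of von Neumann states that $M\mapsto g(s(M))$ is a unitarily invariant matrix norm for \emph{any} symmetric gauge function $g$, so it suffices to check that $g$ itself is a symmetric gauge function, i.e.\ a norm on $\mathbb R^{n_s}$ that is invariant under permutations and sign changes of coordinates and monotone in the absolute values. The permutation/sign invariance and monotonicity are evident from the definition; the one point requiring an argument is that $g$ satisfies the triangle inequality on $\mathbb R^{n_s}$.

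\medskip

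\noindent I expect the triangle inequality for $g$ to be the main obstacle, since the truncation to the top $k$ coordinates means $g$ is not simply an $\ell^p$ norm. The cleanest route is to express $g$ via a variational (Ky Fan type) formula, writing
\begin{equation}
g(x)=\max_{\substack{S\subset\{1,\dots,n_s\}\\ |S|=k}}\Bigl(\sum_{i\in S}|x_i|^p\Bigr)^{1/p},
\end{equation}
which exhibits $g$ as a pointwise maximum over index sets $S$ of the ordinary $\ell^p$ norm restricted to coordinates in $S$. Each restricted $\ell^p$ norm obeys the triangle inequality (Minkowski), and a maximum of seminorms that each satisfy the triangle inequality again satisfies it, giving subadditivity of $g$. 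Alternatively, for $p>1$ one may dualize, writing $g$ as a supremum of linear functionals over the polar of its unit ball, after which subadditivity is automatic; this is in effect the approach underlying the statement that \eqref{Q} is an operator norm. Either way, once $g$ is confirmed to be a symmetric gauge function, von Neumann's theorem closes the argument and simultaneously re-delivers the unitary invariance, so the two halves of the lemma can in fact be unified under a single appeal to the cited reference.
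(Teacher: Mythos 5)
Your proof is correct, but it takes a genuinely different route from the paper's. You reduce everything to the classical theory of symmetric gauge functions: you check that $g(x)=\bigl(\sum_{i=1}^k x_{[i]}^p\bigr)^{1/p}$ is a symmetric gauge function (the only delicate point being subadditivity, which you handle neatly by writing $g$ as a pointwise maximum over $k$-element index sets of restricted $\ell^p$ norms, each of which is a seminorm), and then invoke von Neumann's theorem that $M\mapsto g(s(M))$ is a unitarily invariant norm. That is essentially the content of the cited reference (Bhatia, Section IV.2), so your argument is the ``textbook'' one and is perfectly sound. The paper instead gives a self-contained proof: it first establishes a weighted Ky Fan variational characterization
\begin{equation*}
\sum^k_{i=1}c_i\lambda^\psi_i=\sup\sum^k_{i=1}c_i\,|\braket{\psi|u_i\otimes v_i}|
\end{equation*}
over orthonormal families, via Cauchy--Schwarz and a summation-by-parts argument, and then derives the triangle inequality by applying this formula with $c_i=(\lambda^\psi_i)^{p-1}$ followed by H\"older's inequality. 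The trade-off is this: your approach is shorter if one is willing to cite von Neumann's theorem as a black box, and your max-over-subsets argument for the subadditivity of $g$ is an elementary alternative to the majorization machinery usually used there; the paper's approach avoids any external theorem and, more importantly, the variational formula it proves along the way is reused later (it is the key estimate in the convergence proof of the algorithm), so the extra work is not wasted. Both proofs establish the same statement correctly.
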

\begin{proof}
It will be convenient to prove the following variational characterization of the Schmidt coefficients. For a decreasing sequence $c_1\geq \dots \geq c_k \geq0$, we claim that 
\begin{equation}
\label{varchar}
\sum^k_{i=1}c_i\lambda^\psi_i=\sup_{\substack{\ket{u_1},\dots,\ket{u_k}\in\mathcal{H}_A \text{\ orthonormal}\\\ket{v_1},\dots,\ket{v_k}\in\mathcal{H}_B \text{\ orthonormal}}}\sum^k_{i=1}c_i\ |\braket{\psi|u_i\otimes v_i}|.
\end{equation}
Inserting the Schmidt vectors of $\ket\psi$, the LHS is clearly lesser than or equal to the RHS. To prove the opposite inequality, let $\ket{u_1},\dots,\ket{u_k}\in\mathcal{H}_A$ orthonormal and $\ket{v_1},\dots,\ket{v_k}\in\mathcal{H}_B$ orthonormal. We have
\begin{equation}
\label{eq22}
\begin{aligned}
\sum^k_{i=1}c_i\ |\braket{\psi|u_i\otimes v_i}|&= \sum^k_{i=1}\sum^{n_s}_{j=1}c_i\lambda^\psi_j |\braket{\psi^j_A|u_i}\braket{\psi^j_B|v_i}|\\
&=\sum_{i,j}\left(\sqrt{c_i\lambda^\psi_j}\ |\braket{\psi^j_A|u_i}|\right)\left(\sqrt{c_i\lambda^\psi_j}\ |\braket{\psi^j_B|v_i}|\right)\\
&\leq\left(\sum_{i,j}c_i\lambda^\psi_j\ |\braket{\psi^j_A|u_i}|^2\right)^{1/2}\left(\sum_{i,j}c_i\lambda^\psi_j\ |\braket{\psi^j_B|v_i}|^2\right)^{1/2}\\
\end{aligned}
\end{equation}
by the Schmidt decomposition \eqref{Schmidt} and the Cauchy--Schwarz inequality.
This is a product of two similar quantities, one of which is
\begin{equation}
\begin{aligned}
\sum_{i,j}c_i\lambda^\psi_j\ |\braket{\psi^j_A|u_i}|^2&=\sum^k_{i=1}c_i\Tr[\ketbra{u_i}\left(\sum^{n_s}_{j=1}\lambda^\psi_j\ket{\psi^j_A}\bra{\psi^j_A}\right)]\\
&=\sum^k_{i=1}(c_i-c_{i+1})\Tr[P_{\text{span}\{\ket{u_1},\dots,\ket{u_i}\}}\left(\sum^{n_s}_{j=1}\lambda^\psi_j\ket{\psi^j_A}\bra{\psi^j_A}\right)],
\end{aligned}
\end{equation}
where $P_{\text{span}\{\ket{u_1},\dots,\ket{u_i}\}}$ is the orthogonal projection onto the span of $\ket{u_1},\dots,\ket{u_i}$ and we have defined $c_{k+1}:=0$. Since $c_i-c_{i+1}\geq0$, we see that the maximum is attained for $\ket{u_i}=\ket{\psi^i_A}$, and \eqref{eq22} implies
\begin{equation}
\begin{aligned}
\sum^k_{i=1}c_i\ |\braket{\psi|u_i\otimes v_i}|&\leq \left(\sum_{i,j}c_i\lambda^\psi_j\ |\braket{\psi^j_A|u_i}|^2\right)^{1/2}\left(\sum_{i,j}c_i\lambda^\psi_j\ |\braket{\psi^j_B|v_i}|^2\right)^{1/2}\\
&\leq\ \sum^{k}_{i=1}c_i\lambda^\psi_i,
\end{aligned}
\end{equation}
which proves the claim.

To prove the statement, the only non-trivial thing to check is the triangle inequality. Ignoring normalization, suppose \eqref{Schmidt} is the Schmidt decomposition of $\ket\psi:=\ket{\psi_1}+\ket{\psi_2}$. We then have
\begin{equation}
\begin{aligned}
\|\ket\psi\|^p_{p,k}&=\left|\sum^k_{i=1} (\lambda^\psi_i)^{p-1} \braket{\psi|\psi^i_A\otimes\psi^i_B}\right|\\
&\leq \sum^k_{i=1} (\lambda^\psi_i)^{p-1}|\braket{\psi_1|\psi^i_A\otimes\psi^i_B}|+\sum^k_{i=1} (\lambda^\psi_i)^{p-1}|\braket{\psi_2|\psi^i_A\otimes\psi^i_B}|.
\end{aligned}
\end{equation}
By the variational expression \eqref{varchar} with $c_i=(\lambda^\psi_i)^{p-1}$ and H\"older's inequality \footnote{For sequences $(x_1,\dots,x_k),(y_1,\dots,y_k)\in\mathbb{C}^k$, and $a,b\in(1,\infty)$ with $1/a+1/b=1$, H\"older's inequality implies $\sum^k_{i=1}|x_iy_i|\leq(\sum^k_{i=1}|x_i|^a)^{1/a}(\sum^k_{i=1}|y_i|^b)^{1/b}$.}, we find
\begin{equation}
\begin{aligned}
\|\ket\psi\|^p_{p,k}&\leq \sum^k_{i=1} (\lambda^\psi_i)^{p-1}\lambda^{\psi_1}_i+\sum^k_{i=1} (\lambda^\psi_i)^{p-1}\lambda^{\psi_2}_i\\
&\leq \|\ket\psi\|^{p-1}_{p,k}(\|\ket{\psi_1}\|_{p,k}+\|\ket{\psi_2}\|_{p,k}),
\end{aligned}
\end{equation}
which implies the triangle inequality.
\end{proof}

\subsection{Entanglement entropies}
\label{EE}
\noindent The reduction of $\ket{\psi}$ in \eqref{Schmidt} to system $A$ is
\begin{equation}
\label{rdm}
\rho^\psi_A:=\Tr_B[\dyad{\psi}{\psi}]=\sum^{n_s}_{i=1} (\lambda^\psi_i)^2 \ketbra{\psi^i_A}{\psi^i_A},
\end{equation}
and similar for the \textit{reduced density matrix} on $B$. This expression is one reason that the Schmidt norms \eqref{Q} are useful. For example,
\begin{equation}
\|\ket\psi\|^p_{p,n_s}=\Tr_A[(\rho^\psi_A)^{p/2}],
\end{equation}
so that the $\alpha$-R\'enyi entropy \cite{Renyi}, defined for a density matrix $\rho$ (for $\alpha\geq0$ and $\alpha\neq1$) as
\begin{equation}
\label{renent}
S_\alpha(\rho):=\frac{1}{1-\alpha}\log(\Tr[\rho^\alpha]),
\end{equation}
becomes
\begin{equation}
\label{renyipsi}
S_\alpha(\rho^\psi_A)=\frac{2\alpha}{1-\alpha}\log(\|\ket\psi\|_{2\alpha,n_s}).
\end{equation}
The \textit{von Neumann entropy} is
\begin{equation}
\label{vnent}
S(\rho):=\Tr[-\rho\log\rho]=\lim_{\alpha\to1}S_\alpha(\rho),
\end{equation}
and $S(\rho^\psi_A)$ is the \textit{entanglement entropy} of $\ket\psi$ across the bipartition. Note that we can approximately minimize this entropy by applying the algorithm to $\alpha\downarrow1$, whereas $\alpha\uparrow1$ gives an approximate maximization. Both optimizers are pure states by convexity of the underlying norm.

\section{Basic Results}
\label{basics}
This section discusses basic properties of the algorithm, such as convergence and generalizations.
\subsection{Convergence and fixed point equation}
\label{secconv}
\begin{theorem}
\label{convergence}
Let $p\geq1$. Under the iteration \eqref{iteration}, the quantity $\|\ket\psi\|_{p,k}$ defined in \eqref{Q} increases and converges.
\end{theorem}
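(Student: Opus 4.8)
The plan is to show the sequence of norm values is monotonically non-decreasing and then invoke boundedness in a finite-dimensional space. Throughout I write $\ket{\tilde\phi}:=\sum_{i=1}^k (\lambda^\psi_i)^{p-1}\ket{\psi^i_A}\otimes\ket{\psi^i_B}$ for the unnormalized vector appearing in \eqref{iteration}, so that $\ket\phi=P\ket{\tilde\phi}/\|P\ket{\tilde\phi}\|$. Since every iterate lies in $U$, the projection acts trivially on $\ket\psi$, i.e.\ $P\ket\psi=\ket\psi$, and because $\ket\psi$ is normalized we have $\lambda^\psi_1>0$, hence $\|\ket\psi\|_{p,k}>0$. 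The central object to control is the mixed quantity $\sum_{i=1}^k (\lambda^\psi_i)^{p-1}\lambda^\phi_i$, which I would bound from below and from above.

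For the lower bound I would apply the variational characterization \eqref{varchar} to the \emph{new} state $\ket\phi$ with the decreasing sequence $c_i=(\lambda^\psi_i)^{p-1}$, but test it against the \emph{old} Schmidt vectors $\ket{\psi^i_A},\ket{\psi^i_B}$ rather than the optimal ones. This yields $\sum_i (\lambda^\psi_i)^{p-1}\lambda^\phi_i\geq\sum_i (\lambda^\psi_i)^{p-1}|\braket{\phi|\psi^i_A\otimes\psi^i_B}|\geq|\braket{\phi|\tilde\phi}|$, where the last step is the triangle inequality. Since $P$ is a Hermitian projection, $\braket{\phi|\tilde\phi}=\|P\ket{\tilde\phi}\|$, and Cauchy--Schwarz together with $P\ket\psi=\ket\psi$ gives $\|P\ket{\tilde\phi}\|\geq|\braket{\psi|\tilde\phi}|=\sum_i(\lambda^\psi_i)^p=\|\ket\psi\|^p_{p,k}$. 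Chaining these produces $\sum_i (\lambda^\psi_i)^{p-1}\lambda^\phi_i\geq\|\ket\psi\|^p_{p,k}$; as a bonus this shows the denominator in \eqref{iteration} never vanishes, so the iteration is well defined.

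For the upper bound I would apply H\"older's inequality with exponents $p/(p-1)$ and $p$ to obtain $\sum_i (\lambda^\psi_i)^{p-1}\lambda^\phi_i\leq\|\ket\psi\|^{p-1}_{p,k}\,\|\ket\phi\|_{p,k}$ (the case $p=1$ reduces to the identity $\sum_i\lambda^\phi_i=\|\ket\phi\|_{1,k}$). Combining the two bounds and dividing by the strictly positive factor $\|\ket\psi\|^{p-1}_{p,k}$ gives $\|\ket\phi\|_{p,k}\geq\|\ket\psi\|_{p,k}$, the desired monotonicity. The sequence of values is then non-decreasing and bounded above, since any norm is bounded on the unit sphere of a finite-dimensional space, so it converges. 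The main obstacle is not any single estimate but the idea of introducing the mixed quantity $\sum_i (\lambda^\psi_i)^{p-1}\lambda^\phi_i$ and squeezing it between $\|\ket\psi\|^p_{p,k}$ and $\|\ket\psi\|^{p-1}_{p,k}\|\ket\phi\|_{p,k}$; once the right test vectors are chosen and the variational formula is paired with H\"older in this way, everything else is routine.
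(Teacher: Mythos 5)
Your proof is correct and follows essentially the same route as the paper: the chain $\|\ket\psi\|^p_{p,k}=|\braket{\psi|P\tilde\phi}|\leq\|P\ket{\tilde\phi}\|=\braket{\phi|\tilde\phi}\leq\sum_i(\lambda^\psi_i)^{p-1}\lambda^\phi_i\leq\|\ket\psi\|^{p-1}_{p,k}\|\ket\phi\|_{p,k}$, using Cauchy--Schwarz, the variational characterization \eqref{varchar}, and H\"older in exactly the same places, followed by boundedness. The only (welcome) additions are your explicit remark that the denominator in \eqref{iteration} cannot vanish and a slightly different but equally valid boundedness argument.
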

\begin{proof}
We note
\begin{equation}
\label{innerprod}
\|\ket\psi\|_{p,k}^p=\bra{\psi}P\left(\sum^k_{i=1} (\lambda^\psi_i)^{p-1} \ket{\psi^i_A} \otimes \ket{\psi^i_B}\right),
\end{equation}
where the orthogonal projection $P$ was placed in with $P^\dagger=P$ and $P\ket{\psi}=\ket{\psi}$. Consequently, by Cauchy--Schwarz, 
\begin{equation}
\label{increases}
\|\ket\psi\|_{p,k}^p\leq\left\|P\left(\sum^k_{i=1} (\lambda^\psi_i)^{p-1} \ket{\psi^i_A} \otimes \ket{\psi^i_B}\right)\right\|.
\end{equation}
At the same time, by \eqref{iteration} and \eqref{varchar},
\begin{equation}
\label{eq2}
\begin{aligned}
\left\|P\left(\sum^k_{i=1} (\lambda^\psi_i)^{p-1} \ket{\psi^i_A} \otimes \ket{\psi^i_B}\right)\right\|&=
\left|\sum^k_{i=1}(\lambda^\psi_i)^{p-1} \braket{\phi|\psi^i_A \otimes\psi^i_B}\right|\\
&\leq \sum^k_{i=1}(\lambda^\psi_i)^{p-1} |\braket{\phi|\psi^i_A \otimes\psi^i_B}|\\
&\leq \sum^k_{i=1}(\lambda^\psi_i)^{p-1}\lambda^{\phi}_i,
\end{aligned}
\end{equation}
and finally, by H\"older's inequality,
\begin{equation}
\label{eq3}
\sum^k_{i=1}(\lambda^\psi_i)^{p-1}\lambda^{\phi}_i\leq \|\ket\psi\|_{p,k}^{p-1}\|\ket\phi\|_{p,k}.
\end{equation}
Combining \eqref{increases}, \eqref{eq2} and \eqref{eq3}, we see that $\|\ket\psi\|_{p,k}$ increases under the iteration. It is also bounded since $\lambda^\psi_i\leq1$ and so $\|\ket\psi\|^p_{p,k}\leq \|\ket\psi\|^2_{2,n_s}=1$ in all dimensions for $p\geq2$, or alternatively $\|\ket\psi\|^p_{p,k}\leq n_s$ in finite dimensions for $p<2$. A bounded, increasing sequence converges.
\end{proof}

Note that this theorem does not say anything about convergence of the vector. The best we can do in this respect is to use \eqref{iteration}, \eqref{innerprod}, \eqref{eq2} and \eqref{eq3} and note
\begin{equation}
\braket{\psi|\phi}=\frac{\|\ket\psi\|_{p,k}^p}{\left\|P\left(\sum^k_{i=1} (\lambda^\psi_i)^{p-1} \ket{\psi^i_A} \otimes \ket{\psi^i_B}\right)\right\|}\geq\frac{\|\ket\psi\|_{p,k}}{\|\ket\phi\|_{p,k}},
\end{equation}
and so
\begin{equation}
\label{dist}
\|\ket\psi-\ket\phi\|^2\leq2-2\frac{\|\ket\psi\|_{p,k}}{\|\ket\phi\|_{p,k}}.
\end{equation}
Even though the RHS converges upon iteration, this is not quite enough for convergence of the vector because we could in theory move between vectors that all have a large norm $\|\ket\psi\|_{p,k}$. 

Although a better understanding of the iteration would be desirable, in practical applications we can terminate the iteration whenever the increments in $\|.\|_{p,k}$ drop below a set tolerance and the resulting vector can then be used as an output vector with a large norm.

Note that \eqref{dist} implies that a global maximizing vector must be a fixed point of the iteration because the quantity $\|\ket\psi\|_{p,k}$ cannot increase further in that case. Any other fixed points satisfy the following equation.

\begin{corollary}
\label{fixcor}
Fixed points of the iteration \eqref{iteration} satisfy 
\begin{equation}
\label{fixedpoint}
\ket{\psi}:=\frac{P\left(\sum^k_{i=1} (\lambda^\psi_i)^{p-1} \ket{\psi^i_A} \otimes \ket{\psi^i_B}\right)}{\left\|P\left(\sum^k_{i=1} (\lambda^\psi_i)^{p-1} \ket{\psi^i_A} \otimes \ket{\psi^i_B}\right)\right\|},
\end{equation}
and the global maximum is such a fixed point.
\end{corollary}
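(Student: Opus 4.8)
The plan is to dispatch the two assertions separately, since the first is essentially a matter of unwinding definitions while the second rests on combining Theorem \ref{convergence} with the distance bound \eqref{dist}. First I would treat the fixed-point equation. By a fixed point of \eqref{iteration} I mean a normalized $\ket\psi\in U$ whose image $\ket\phi$ under one step of the iteration coincides with $\ket\psi$; substituting $\ket\phi=\ket\psi$ into \eqref{iteration} reproduces \eqref{fixedpoint} verbatim, so beyond this substitution there is nothing to prove. The one point worth recording is that the notion is insensitive to phase: the computation underlying \eqref{innerprod}, namely $\bra\psi P(\sum_i (\lambda^\psi_i)^{p-1}\ket{\psi^i_A}\otimes\ket{\psi^i_B})=\|\ket\psi\|_{p,k}^p$ together with $P\ket\psi=\ket\psi$, shows that $\braket{\psi|\phi}$ is real and strictly positive. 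Hence if the iteration returns a vector proportional to $\ket\psi$, the proportionality constant is forced to be $+1$, and \eqref{fixedpoint} holds exactly.

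For the second assertion I would first secure the existence of a maximizer: $\|\cdot\|_{p,k}$ is a norm, hence continuous, and the constraint set---the unit sphere of the finite-dimensional subspace $U$---is compact, so the supremum in \eqref{goal} is attained at some $\ket\psi$. Let $\ket\phi$ be the image of this maximizer under one iteration. Since $P$ keeps the output in $U$ and $\ket\phi$ is normalized, $\ket\phi$ is an admissible competitor, so maximality gives $\|\ket\phi\|_{p,k}\leq\|\ket\psi\|_{p,k}$. On the other hand Theorem \ref{convergence} gives the reverse inequality $\|\ket\phi\|_{p,k}\geq\|\ket\psi\|_{p,k}$. The two norms therefore coincide.

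It then remains to upgrade equality of norms to equality of vectors, which is exactly what \eqref{dist} supplies: when $\|\ket\phi\|_{p,k}=\|\ket\psi\|_{p,k}$ the ratio $\|\ket\psi\|_{p,k}/\|\ket\phi\|_{p,k}$ equals $1$, so the right-hand side of \eqref{dist} vanishes, forcing $\|\ket\psi-\ket\phi\|=0$ and hence $\ket\phi=\ket\psi$. Thus the maximizer is a genuine fixed point, and by the first part it satisfies \eqref{fixedpoint}.

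The main thing to get right is not any single estimate---every inequality I need has already been assembled in Theorem \ref{convergence} and in the derivation of \eqref{dist}---but rather the logical point flagged in the remark following Theorem \ref{convergence}: convergence of the norm does not in general force convergence of the vector. The reason the argument nevertheless closes here is that at an exact maximizer the norm is not merely approximately stationary but exactly so, so the slack in \eqref{dist} is zero rather than small and the bound becomes sharp enough to pin the vector down. Recognizing this borderline case as precisely the place where \eqref{dist} is tight is the only real content of the proof.
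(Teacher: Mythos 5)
Your proposal is correct and follows essentially the same route as the paper: the fixed-point equation is definitional, and the global maximizer is shown to be a fixed point by combining the monotonicity from Theorem \ref{convergence} with the bound \eqref{dist}, which forces $\ket\phi=\ket\psi$ once the two norms agree. Your additional remarks (existence of the maximizer by compactness, and positivity of $\braket{\psi|\phi}$ ruling out a phase ambiguity) are correct refinements that the paper leaves implicit.
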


For general $P$, it is not at all clear how many states satisfy the fixed point equation \eqref{fixedpoint} and how often the process converges to fixed points that are not the global maximum. Luckily, there are many things we can do without running into this problem, see Section \ref{apps}.

\subsection{Generalizations: coefficients and several cuts}
\label{genr}
The iteration \eqref{iteration} owes its efficacy to the structure of the Schmidt decomposition, and it is natural to ask whether the idea can be generalized. We needed $p\geq1$ for H\"older's inequality in the last line of the proof of Theorem \ref{convergence}---indeed the result does not hold for $p<1$. Nevertheless, there are two generalizations that do work.

First, we could simply include some coefficients in the norm \eqref{goal}. It is essential that they are positive and ordered in size.
\begin{corollary}
For an integer $k$, real numbers $c_1\geq\dots\geq c_k\geq0$ and $p\geq1$, the optimization problem
\begin{equation}
\sup_{\substack{\ket{\psi}\in U \\ \|\ket\psi\|=1}}\  \left(\sum^k_{i=1} c_i(\lambda^\psi_i)^{p}\right)^{1/p}
\end{equation}
allows for a version of Theorem \ref{convergence} with an iteration \eqref{iteration} that is adapted identically.
\end{corollary}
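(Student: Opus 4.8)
The plan is to follow the proof of Theorem \ref{convergence} line by line, inserting the weights $c_i$ in the obvious places and keeping track of the single point where their monotonicity is used. I would adapt the iteration \eqref{iteration} to
\begin{equation}
\ket{\phi}:=\frac{P\left(\sum^k_{i=1} c_i(\lambda^\psi_i)^{p-1} \ket{\psi^i_A} \otimes \ket{\psi^i_B}\right)}{\left\|P\left(\sum^k_{i=1} c_i(\lambda^\psi_i)^{p-1} \ket{\psi^i_A} \otimes \ket{\psi^i_B}\right)\right\|},
\end{equation}
write $N(\psi):=\big(\sum^k_{i=1}c_i(\lambda^\psi_i)^p\big)^{1/p}$ for the weighted quantity, and show that $N$ increases under this iteration and is bounded, so that it converges by the monotone-convergence argument of Theorem \ref{convergence}.

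First I would establish the analogue of \eqref{innerprod}. Denoting the unnormalized numerator of the iteration by $\ket{\chi}$ and using $P^\dagger=P$, $P\ket{\psi}=\ket{\psi}$ and orthonormality of the Schmidt vectors, one gets $\braket{\psi|\chi}=\sum^k_{i=1}c_i(\lambda^\psi_i)^p=N(\psi)^p$. Cauchy--Schwarz with $\|\ket{\psi}\|=1$ then gives $N(\psi)^p\leq\|\ket{\chi}\|$, exactly as in \eqref{increases}. Since $\ket{\phi}=\ket{\chi}/\|\ket{\chi}\|$ and $P\ket{\phi}=\ket{\phi}$, we also have $\|\ket{\chi}\|=\braket{\phi|\chi}=\sum^k_{i=1}c_i(\lambda^\psi_i)^{p-1}\braket{\phi|\psi^i_A\otimes\psi^i_B}$, which is bounded above by $\sum^k_{i=1}c_i(\lambda^\psi_i)^{p-1}|\braket{\phi|\psi^i_A\otimes\psi^i_B}|$ by passing to moduli term by term.

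The one genuinely new point---and the reason positivity and ordering of the $c_i$ are essential---is the application of the variational characterization \eqref{varchar} to the weight sequence $d_i:=c_i(\lambda^\psi_i)^{p-1}$, with the Schmidt vectors of $\psi$ playing the role of the orthonormal families. This requires $d_1\geq\dots\geq d_k\geq0$. Because $p\geq1$ makes $x\mapsto x^{p-1}$ nondecreasing on $[0,\infty)$, the sequence $(\lambda^\psi_i)^{p-1}$ is decreasing and nonnegative; multiplying by the decreasing nonnegative $c_i$ preserves both properties, so \eqref{varchar} applies and yields $\|\ket{\chi}\|\leq\sum^k_{i=1}c_i(\lambda^\psi_i)^{p-1}\lambda^\phi_i$. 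I expect this to be the main thing to check, though the difficulty lies in spotting that the hypotheses are exactly what make $d_i$ admissible, rather than in any computation.

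Finally I would close the chain with a reweighted H\"older inequality. Splitting $c_i=c_i^{(p-1)/p}c_i^{1/p}$ and grouping the summand as $\big(c_i(\lambda^\psi_i)^p\big)^{(p-1)/p}\big(c_i(\lambda^\phi_i)^p\big)^{1/p}$, H\"older with exponents $p/(p-1)$ and $p$ gives $\sum^k_{i=1}c_i(\lambda^\psi_i)^{p-1}\lambda^\phi_i\leq N(\psi)^{p-1}N(\phi)$. Combining the three bounds yields $N(\psi)^p\leq N(\psi)^{p-1}N(\phi)$, hence $N(\psi)\leq N(\phi)$ (the case $N(\psi)=0$ being trivial). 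Boundedness follows from $N(\psi)^p\leq c_1\|\ket{\psi}\|^p_{p,k}$ together with the bounds already established in Theorem \ref{convergence}, and a bounded increasing sequence converges. For $p=1$ the H\"older exponents degenerate, but the step is then unnecessary, since the chain already closes at $\|\ket{\chi}\|\leq N(\phi)$.
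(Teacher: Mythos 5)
Your proof is correct and is exactly the adaptation the paper intends: the paper gives no explicit proof of this corollary, asserting only that Theorem \ref{convergence} carries over, and your argument follows that proof line by line, correctly identifying that the only new points are the admissibility of the weights $d_i=c_i(\lambda^\psi_i)^{p-1}$ in \eqref{varchar} and the reweighted H\"older step.
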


Second, we can try to simultaneously maximize across different cuts.
This requires some notational caution: so far we have assumed that we were only studying one cut that was implicit in the definition of the Schmidt norm, so we will avoid talking about norms for now. Assume $\ket\psi\in\mathcal{H}=\mathcal{H}_A\otimes\mathcal{H}_B=\mathcal{H}_A'\otimes\mathcal{H}_B'$, with two corresponding decompositions
\begin{equation}
\label{Schmidt2}
\ket{\psi}=\sum_{i} \lambda^\psi_i \ket{\psi^i_A} \otimes \ket{\psi^i_B}=\sum_{j} \mu^\psi_j \ket{\psi^j_{A'}} \otimes \ket{\psi^j_{B'}}.
\end{equation}

We can maximize sums involving both at the same time: the following statement demonstrates this, and is easy to generalize. 
\begin{corollary}
\label{morecuts}
For $p\geq1$ and $k,l$ integers, the quantity
\begin{equation}
\label{goal2}
\sup_{\substack{\ket{\psi}\in U \\ \|\ket\psi\|=1}}\left[ \sum^k_{i=1} (\lambda^\psi_i)^{p}+\sum^l_{j=1} (\mu^\psi_j)^{p}\right],
\end{equation}
increases and converges under the iteration \eqref{iteration} if $\ket\phi$ is defined as the normalized vector proportional to
\begin{equation}
P\left(\sum^k_{i=1} (\lambda^\psi_i)^{p-1} \ket{\psi^i_A} \otimes \ket{\psi^i_B}+\sum^l_{j=1} (\mu^\psi_j)^{p-1} \ket{\psi^j_{A'}} \otimes \ket{\psi^j_{B'}}\right).
\end{equation}
\end{corollary}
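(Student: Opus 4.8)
The plan is to follow the proof of Theorem \ref{convergence} line by line, treating the two cuts in parallel and only merging them at the H\"older step. Write $F(\psi):=\sum^k_{i=1}(\lambda^\psi_i)^p+\sum^l_{j=1}(\mu^\psi_j)^p$ for the quantity to be increased, and let $\ket{w_\psi}$ denote the unnormalized source vector $\sum^k_{i=1}(\lambda^\psi_i)^{p-1}\ket{\psi^i_A}\otimes\ket{\psi^i_B}+\sum^l_{j=1}(\mu^\psi_j)^{p-1}\ket{\psi^j_{A'}}\otimes\ket{\psi^j_{B'}}$, so that $\ket\phi=P\ket{w_\psi}/\|P\ket{w_\psi}\|$. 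First I would establish the analogue of \eqref{innerprod}: using $P^\dagger=P$, $P\ket\psi=\ket\psi$, and the Schmidt identities $\braket{\psi|\psi^i_A\otimes\psi^i_B}=\lambda^\psi_i$ and $\braket{\psi|\psi^j_{A'}\otimes\psi^j_{B'}}=\mu^\psi_j$, one gets $F(\psi)=\bra{\psi}P\ket{w_\psi}$. Cauchy--Schwarz with $\|\ket\psi\|=1$ then gives $F(\psi)\leq\|P\ket{w_\psi}\|$, exactly as in \eqref{increases}.

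Next I would bound $\|P\ket{w_\psi}\|$ from above in terms of $F(\phi)$. Since $P\ket\phi=\ket\phi$, we have $\|P\ket{w_\psi}\|=\bra{\phi}P\ket{w_\psi}=\braket{\phi|w_\psi}$, and after the triangle inequality the right-hand side splits into one sum over the $(A,B)$ cut and one over the $(A',B')$ cut. Here is the one genuinely new point: the variational characterization \eqref{varchar} can be applied \emph{separately} to each cut --- the coefficients $(\lambda^\psi_i)^{p-1}$ and $(\mu^\psi_j)^{p-1}$ are each decreasing and nonnegative because $p\geq1$, and the relevant Schmidt vectors are orthonormal within their own factorization --- which yields $\|P\ket{w_\psi}\|\leq\sum^k_{i=1}(\lambda^\psi_i)^{p-1}\lambda^\phi_i+\sum^l_{j=1}(\mu^\psi_j)^{p-1}\mu^\phi_j$. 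I would then apply H\"older's inequality not to the two sums individually but to the single pooled sequence obtained by concatenating them; because $((\lambda^\psi_i)^{p-1})^{p/(p-1)}=(\lambda^\psi_i)^p$ and likewise for $\mu$, the resulting factors are exactly $F(\psi)^{(p-1)/p}$ and $F(\phi)^{1/p}$. Chaining the inequalities gives $F(\psi)\leq F(\psi)^{(p-1)/p}F(\phi)^{1/p}$, hence $F(\psi)\leq F(\phi)$.

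Finally, monotonicity together with boundedness yields convergence: each $\lambda^\psi_i,\mu^\psi_j\leq1$ while $\sum_i(\lambda^\psi_i)^2=\sum_j(\mu^\psi_j)^2=1$, so $F(\psi)\leq2$ for $p\geq2$ and $F(\psi)\leq k+l$ in finite dimensions for $p<2$; a bounded increasing sequence converges. The main obstacle is purely the bookkeeping at the variational/H\"older interface: one must resist applying \eqref{varchar} and H\"older at the same granularity. Splitting per cut is forced by orthonormality, since the two families of Schmidt vectors live in different tensor factorizations of the same $\mathcal{H}$ and need not be mutually orthonormal, whereas pooling the sequences \emph{before} H\"older is precisely what recombines the two $\psi$-sums into the single quantity $F(\psi)$ that must reappear on both sides for the cancellation to close. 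The case $p=1$ needs only a one-line check, since then the exponent $p/(p-1)$ is infinite and the H\"older step degenerates to the identity $F(\psi)^0F(\phi)=F(\phi)$. The stated generalization to more than two cuts is immediate, as one simply enlarges the pooled sequence.
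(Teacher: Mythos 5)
Your proof is correct and is exactly the intended adaptation of the proof of Theorem \ref{convergence}: establish $F(\psi)=\bra{\psi}P\ket{w_\psi}\leq\|P\ket{w_\psi}\|$ by Cauchy--Schwarz, bound $\|P\ket{w_\psi}\|$ using the variational characterization \eqref{varchar} applied to each cut separately, and then apply H\"older to the pooled sequence so that $F(\psi)^{(p-1)/p}F(\phi)^{1/p}$ appears and the monotonicity closes. You also correctly identify the one point where care is needed (per-cut for \eqref{varchar}, pooled for H\"older), so there is nothing to add.
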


As shown in Section \ref{apps}, this can work very well, but there is a practical objection to considering too many cuts. We mentioned the risk of converging to a fixed point that is not the global maximum when we derived the fixed point equation in Corollary \ref{fixcor}. The adapted algorithm has a similar equation, but this time it contains many more `variables' and it is therefore likely to have more solutions. Hence, at least heuristically, the more cuts, the larger the risk of ending up at unwanted fixed points---something that for example becomes evident in the application in Section \ref{AME}. It would be very interesting to find a solution to this problem, for instance by adding a simulated annealing component to the algorithm.

\subsection{Number of operations}
Recall $d_A:=\dim(\mathcal{H}_A)$ and $d_B:=\dim(\mathcal{H}_B)$.
The most costly operation in the algorithm is acting with the projection $P$ on a vector of length $d_Ad_B$, which requires $O(d_A^2d_B^2)$ operations. In some cases, it can be hard to generate $P$ in an efficient way, so that might also be expensive.

One can also ask how many iterations are required to reach convergence, but in practice this depends heavily on the application, see for example Table \ref{numbers}.

\subsection{Comparison with an iteration suggested by Shor}
The algorithm shares some features with an idea attributed to Shor by Datta and Ruskai \cite{DattaRuskai}. Adapting its purpose from minimizing entanglement entropy to the current goal \eqref{goal} with $k=n_s$ and $p=2\alpha\geq2$, that is $\Tr[(\rho^\psi_A)^\alpha]=\|\psi\|^{2\alpha}_{2\alpha,n_s}$ with $\alpha\geq1$, it goes as follows.
\textit{\begin{enumerate}
\item Pick any initial non-zero $\ket{\psi}\in U$.
\item Calculate the density matrix $P((\rho^\psi_A)^{\alpha-1}\otimes\mathds{1})P$ and determine the eigenvector $\ket\phi$ corresponding to the largest eigenvalue.
\item Redefine $\ket{\psi}:=\ket{\phi}$ and repeat from $(ii)$.
\end{enumerate}}

The norm now increases in a similar way upon iteration (we use that $\ket\psi$ and $\ket\phi$ are eigenvectors of $P$ and H\"older's inequality).
\begin{equation}
\begin{aligned}
\|\ket\psi\|^{2\alpha}_{2\alpha,n_s}&=\Tr_{AB}\left[\dyad{\psi}{\psi}P((\rho^\psi_A)^{\alpha-1}\otimes\mathds{1})P\right]\\
&\leq \Tr_{AB}\left[\dyad{\phi}{\phi}P((\rho^\psi_A)^{\alpha-1}\otimes\mathds{1})P\right]\\
&=\Tr_A\left[(\rho^\psi_A)^{\alpha-1}\rho^\phi_A\right]\\
&\leq \|\ket\psi\|^{2\alpha-2}_{2\alpha,n_s}\|\ket\psi\|^{2}_{2\alpha,n_s}.
\end{aligned}
\end{equation}

The algorithm in Section \ref{algo} only works with vectors, whereas this one uses density matrices. 
That has two consequences: I. its scope is restricted to $p=2\alpha\geq2$ and traces ($k=n_s$) and II. its computational cost is higher because it requires multiplication of square matrices of size $d_Ad_B$, rather than just acting with such a matrix on a vector.

\subsection{The $p=2$ case}
\label{Johnston}
The optimization problem \eqref{goal} for $p=2$ and general $k$ has been studied before \cite{Pankowski, Johnston1}. It can be written as an operator norm of the projection $P$ defined in terms of the vector norm $\|.\|_{2,k}$. This quantity appears on the right-hand side of the lemma below. Note that the condition $\|\ket\psi\|_{2,k}=1$ simply says that the \textit{Schmidt rank}---the number of non-zero Schmidt coefficients---is $k$ or less.
\begin{lemma}[appeared in \cite{Pankowski}]
\label{Panko}
We have
\begin{equation}
\label{halfhalf}
\sup_{\substack{\ket{\psi}\in U \\ \|\ket\psi\|=1}} \|\ket\psi\|_{2,k}=\sup_{\|\ket\psi\|_{2,k}=1}\|P\ket\psi\|.
\end{equation}
\end{lemma}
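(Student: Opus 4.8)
The plan is to read the right-hand side as the author's note dictates: the constraint $\|\ket\psi\|_{2,k}=1$ is understood together with $\|\ket\psi\|=1$, so that the supremum runs over unit vectors of Schmidt rank at most $k$ (for which $\|\ket\psi\|_{2,k}=\|\ket\psi\|$). With this reading both sides are optimizations over compact sets in finite dimension, so the suprema are attained and I may pass freely to optimizers. The engine of the argument is a variational formula for the truncated norm,
\begin{equation}
\label{varchar2}
\|\ket\psi\|_{2,k}=\sup_{\substack{\|\ket\phi\|=1\\ \text{Schmidt rank}(\ket\phi)\le k}}|\braket{\phi|\psi}|.
\end{equation}
Regarding $\ket\psi$ and $\ket\phi$ as the operators $\mathcal H_B\to\mathcal H_A$ described in Section \ref{Schmidtnormssect}, the inner product becomes a trace, so von Neumann's trace inequality gives $|\braket{\phi|\psi}|\le\sum_i\lambda^\phi_i\lambda^\psi_i$; since $\lambda^\phi_i=0$ for $i>k$, Cauchy--Schwarz bounds this by $\|\ket\phi\|_{2,k}\,\|\ket\psi\|_{2,k}=\|\ket\psi\|_{2,k}$, with equality for $\ket\phi$ supported on the top $k$ Schmidt vectors of $\ket\psi$ and $\lambda^\phi_i\propto\lambda^\psi_i$. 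I would record \eqref{varchar2} first; it plays the role here that \eqref{varchar} plays for the Ky Fan norms.

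Granting \eqref{varchar2}, the cleanest route is to substitute it into the left-hand side and interchange the two suprema,
\begin{equation}
\sup_{\substack{\ket\psi\in U\\ \|\ket\psi\|=1}}\|\ket\psi\|_{2,k}
=\sup_{\substack{\ket\psi\in U\\ \|\ket\psi\|=1}}\ \sup_{\substack{\|\ket\phi\|=1\\ \text{rank}(\ket\phi)\le k}}|\braket{\phi|\psi}|
=\sup_{\substack{\|\ket\phi\|=1\\ \text{rank}(\ket\phi)\le k}}\ \sup_{\substack{\ket\psi\in U\\ \|\ket\psi\|=1}}|\braket{\phi|\psi}|.
\end{equation}
For the inner supremum I would use $\braket{\phi|\psi}=\braket{P\phi|\psi}$, valid because $\ket\psi\in U=\image(P)$ and $P^\dagger=P$; the maximal overlap of the fixed vector $P\ket\phi\in U$ with a unit vector of $U$ equals exactly $\|P\ket\phi\|$, attained at $\ket\psi=P\ket\phi/\|P\ket\phi\|$. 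This collapses the right-hand side to $\sup_{\|\ket\phi\|=1,\ \text{rank}\le k}\|P\ket\phi\|$, which is precisely \eqref{halfhalf}.

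If one prefers to avoid the interchange, the same two ingredients give the two inequalities directly. For ``$\le$'' I would truncate a maximizer $\ket\psi\in U$ to its top $k$ Schmidt terms, normalize to a rank-$k$ test vector $\ket\phi$, and note $\|P\ket\phi\|\ge|\braket{\psi|P\phi}|=|\braket{\psi|\phi}|=\|\ket\psi\|_{2,k}$, using $P\ket\psi=\ket\psi$. For ``$\ge$'' I would take a rank-$k$ unit $\ket\phi$, set $\ket\psi=P\ket\phi/\|P\ket\phi\|\in U$, write $\|P\ket\phi\|=\braket{\psi|\phi}$, and bound this by $\|\ket\psi\|_{2,k}$ through \eqref{varchar2}.

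I expect the only genuine obstacle to be \eqref{varchar2} itself---specifically the trace-inequality step that converts an overlap of two bipartite vectors into a sum of products of their Schmidt coefficients---together with the bookkeeping that correctly matches the two normalizations ($\|\cdot\|$ on one side, $\|\cdot\|_{2,k}$ on the other, which is exactly the content of the author's remark identifying $\|\ket\psi\|_{2,k}=1$ with Schmidt rank $\le k$). Everything after that is Cauchy--Schwarz and the self-adjoint projection identity $\braket{\phi|\psi}=\braket{P\phi|\psi}$.
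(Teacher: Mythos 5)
Your proposal is correct and takes essentially the same route as the paper: your fallback two-inequality argument is precisely the paper's proof (truncate $\ket\psi$ to its top $k$ Schmidt terms and apply Cauchy--Schwarz for ``$\leq$'', i.e.\ \eqref{increases}; bound $\|P\ket\phi\|=|\braket{\phi'|\phi}|\leq\sum_i\lambda^{\phi'}_i\lambda^\phi_i\leq\|\ket{\phi'}\|_{2,k}$ for ``$\geq$'', i.e.\ \eqref{eq2}--\eqref{eq3}), and your dual variational formula for $\|\cdot\|_{2,k}$ is just \eqref{varchar} with $c_i=\lambda^\phi_i$ repackaged. Your explicit reading of the right-hand constraint as ``unit vector of Schmidt rank at most $k$'' is the intended one and matches the remark preceding the lemma.
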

\begin{proof}
We use estimates from the proof of Theorem \ref{convergence}. Recall that $\ket\psi\in U$ is arbitrary.
We can see that the LHS in \eqref{halfhalf} is lesser or equal than the RHS by dividing \eqref{increases} with $p=2$ by $\|\ket\psi\|_{2,k}=\sqrt{\sum^k_{i=1}(\lambda_i^\psi)^2}$. The opposite inequality is found by making the same division in \eqref{eq2} and \eqref{eq3}.
\end{proof}
An algorithm based on semi-definite programming was suggested to compute lower bounds to this quantity \cite{Johnston2}. Another approach \cite{howtocompute} is included in the MATLAB package QETLAB \cite{qetlab} that we used for tests, but comparisons suggest that it is much slower and more likely to get stuck at unwanted fixed points than the one discussed here. In many respects it also has a narrower scope, with the big exception that it can handle general operators rather than just orthogonal projections.

\section{Applications and Examples}
\label{apps}
In this section, we discuss several ways in which the algorithm can be applied. In each case, we simply ask how the algorithm can help to explore a certain topic. In some cases, we include the success rate and the number of iterations needed to get close to the maximum (Table \ref{numbers}) to give the reader an idea of how the algorithm performs in practice.

\begin{table}
\caption{The following data is included to give an indication of the number of iterations required and the prevalence of the fixed point issue, partly because it is hard to make general statements about this.
The first column in the table lists an application discussed in Section \ref{apps} for which the global maximum is either known or conjectured. We run the algorithm ten times to try to find it. The second column lists the average number of iterations needed for the norm to get within $10^{-5}$ of the conjectured or known answer if the procedure does not get stuck at other fixed points. The third column lists the number of times this does not happen and the procedure is successful.}
\begin{center}
\begin{tabular}{ l|c|c} 
 & average no.\ of iterations  & no.\ of successes \\
\hline
Conjecture \ref{Yangconj} with $d=12$, $N=8$ and $K=4$ &4.4&10\\
Conjecture \ref{myconjecture} with $d=10$, $N=6$ &17.1&10\\
Conjecture \ref{conjentropies} for $S_2$ with $d=10$, $N=6$ &9.8 &10\\
AME state (Section \ref{AME}) for $d=3$, $N=3$  &62.3 &10 \\
AME state for $d=4$, $N=4$  &234.7 &3 \\
AME state for $d=5$, $N=4$  & n/a &0\\
AME state for $d=5$, $N=5$  & 9703.5 &10
\end{tabular}
\end{center}
\label{numbers}
\end{table}

\subsection{$N$-representability and fermionic particle entanglement}
\label{fermions}
The Hilbert space of $N$ fermions---identical particles---is the antisymmetric product $\wedge^N\mathcal{H}$ of $N$ copies of the 1-particle space $\mathcal{H}$ (see Definitions \ref{Aproduct} and \ref{atp} below). States in this Hilbert space have Schmidt decompositions \eqref{Schmidt} with $\mathcal{H}_A=\wedge^K\mathcal{H}$ and $\mathcal{H}_B=\wedge^{N-K}\mathcal{H}$ and associated reduced density matrices \eqref{rdm}. The overall antisymmetry places strong constraints on both, and people have been trying to find out what these are since 1959 \cite{Coulsonchallenge,Coulson}.\footnote{We shall make no attempt to give a complete overview of the very large number of papers discussing this problem, but see for example \cite{Colemanbook}.}

What motivated them was quantum chemistry: physical interactions like the Coulomb repulsion between electrons usually act pairwise. This means that interaction operators take the form $\sum_{i<j}V_{ij}$, where each term acts as a fixed 2-body interaction $V_{12}$ on particles $i$ and $j$ and as the identity on all others. The expectation value of such an operator for an $N$-fermion state $\ket\psi$  then becomes
\begin{equation}
\label{intera}
\sum_{i<j}\bra{\psi}V_{ij}\ket\psi=\tbinom{N}{2}\Tr[\rho^\psi_{2}V],
\end{equation}
where $\rho^\psi_{2}$ is the reduction \eqref{rdm} of $\ket\psi$ to two particles (note that all reductions are equivalent because of the antisymmetry). 

Say that we wanted to minimize a quantity like \eqref{intera} by varying over all states $\ket\psi$. Computationally, we would favour the right-hand side of the equation because it involves a 2-particle space only, but there is one obvious problem: we do not know what constraints the antisymmetry imposes on the reduction $\rho^\psi_{2}$, or equivalently the Schmidt vectors and coefficients in \eqref{Schmidt}. This means that we do not know which $\rho^\psi_{2}$ to include in the minimization of the RHS of \eqref{intera}.

Enticed by the computational gains that such knowledge would bring, people actively thought about this problem in the sixties. It goes by the name of \textit{$N$-representability} \cite{Coleman} and is a specific example of the \textit{quantum marginal problem} \cite{KlyachkoQuantumMarginal}. It is mostly strongly associated with Coulson, who raised the challenge at a conference in 1959 \cite{Coulsonchallenge,Coulson}, and Coleman, who studied the problem throughout his career (see e.g.\ the papers \cite{Coleman, Coleman2}, book \cite{Colemanbook} and summary \cite{Colemanstory}). 

These days, entanglement provides a new motivation to look at it. The entanglement between $K$ fermions and the remaining fermions in the system---as associated to a Schmidt decomposition \eqref{Schmidt} with $\mathcal{H}_A=\wedge^K\mathcal{H}$, $\mathcal{H}_B=\wedge^{N-K}\mathcal{H}$---is known as \textit{particle entanglement} \cite{Schoutens}. It is sometimes confused with \textit{mode entanglement}, which describes the entanglement between restrictions of the state to complementary regions of (Hilbert) space, or even dismissed as irrelevant \cite{Benatti,Shi}, but that is certainly not the case in the context of condensed matter systems \cite{Dowling, Schoutens} and quantum chemistry \cite{Coulson}. 

So where do we stand on $N$-representability today? We know that inferring the full implications of antisymmetry is a QMA-hard problem---not even efficiently solvable by a quantum computer \cite{Christandl}. That should not discourage us from trying to extract basic mathematical  information: the full constraints imposed by antisymmetry on the Schmidt decomposition with $K=1$ were recently computed in systems with small dimensions and particle numbers \cite{Klyachko}. Interestingly, they go beyond the original Pauli principle that says that no two fermions can occupy the same state. Our algorithm cannot shed new light on this progress, and we restrict the discussion to $N$-representability questions that it can handle.\\

To summarize the above, it has long been recognized that is is important to infer the consequences of antisymmetry on the Schmidt decomposition \eqref{Schmidt} and the related reduced density matrices \eqref{rdm} because that allows for a more efficient computation of \eqref{intera}. In this section, we numerically verify a number of conjectures regarding the effect of antisymmetry on the Schmidt coefficients only. We also formulate a new one (Conjecture \ref{myconjecture}).

The following basic definitions will be used below.
\begin{definition}[Hilbert space of $N$ fermions]
\label{Aproduct}
We define $\wedge^N\mathcal{H}$---the antisymmetric tensor product of $N$ copies of a Hilbert space $\mathcal{H}$---as the subspace of $\otimes^N\mathcal{H}$ whose elements $\ket\psi$ are completely antisymmetric upon exchange of two particles. This means that for $1\leq i_1<i_2\leq N$ and $U_{(i_1\ i_2)}$ the unitary implementing the permutation $(i_1\ i_2)$ on $\otimes^N\mathcal{H}$,
\begin{equation}
U_{(i_1\ i_2)}\ket\psi=-\ket\psi.
\end{equation}
The orthogonal projection onto this space is
\begin{equation}
P:=\frac{1}{N!}\sum_{\sigma\in S_N}(-1)^{\text{sgn}(\sigma)}U_\sigma,
\end{equation}
where $S_N$ is the group of permutations of $N$ elements.
\end{definition}

\begin{definition}[Antisymmetric tensor product]
\label{atp}
Let $K_1,K_2\in\mathbb{N}$, $\ket\Psi\in\wedge^{K_1}\mathcal{H}$ and $\ket\Phi\in\wedge^{K_2}\mathcal{H}$. Let $P$ be the orthogonal projection onto the antisymmetric subspace $\wedge^{K_1+K_2}\mathcal{H}$. We define the antisymmetric tensor product
\begin{equation}
\ket{\Psi\wedge\Phi}:=\begin{cases}
\frac{P\ket{\Psi\otimes\Phi}}{\|P\ket{\Psi\otimes\Phi}\|}  & \text{if $\|P\ket{\Psi\otimes\Phi}\|\neq0$} \\
\hspace{0.6cm}$0$ & \text{otherwise}
  \end{cases}.
\end{equation}
It is easy to check that this tensor product is associative $(\alpha\wedge\beta)\wedge\gamma=\alpha\wedge(\beta\wedge\gamma)$ and so we will simply denote such products as $\alpha\wedge\beta\wedge\gamma$. Note that it is not linear.
\end{definition}

\begin{definition}[Slater determinant]
\label{Slaterdef}
An antisymmetric product of $N$ orthonormal vectors $\ket{\phi_1},\dots,\ket{\phi_N}\in\mathcal{H}$,
\begin{equation}
\label{Slaterdet}
\ket{\phi_1\wedge\dots\wedge\phi_N},
\end{equation}
is known as a Slater determinant. For $\mathcal{H}=\mathbb{C}^d$ with orthonormal basis $\ket{\phi_1},\dots,\ket{\phi_d}$, the $\binom{d}{N}$ $N$-fermion Slater determinants that we can build from this basis are an orthonormal basis of $\wedge^N\mathbb{C}^d$.
\end{definition}

We first discuss some known facts. The following result is essentially the Pauli principle.

\begin{theorem}[Coleman \cite{Coleman}]
For $\mathcal{H}_A=\mathcal{H}$ and $\mathcal{H}_B=\wedge^{N-1}\mathcal{H}$,
\begin{equation}
\sup_{\substack{\ket{\psi}\in\wedge^{N}\mathcal{H}\\ \|\ket\psi\|=1}} \|\ket\psi\|^2_{2,1}=\frac{1}{N}.
\end{equation}
Let $P$ be the orthogonal projection onto $\wedge^{N}\mathbb{C}^d$. The supremum above is attained if and only if
\begin{equation}
\ket\psi=\ket{\phi\wedge\Phi}=\frac{P\ket{\phi\otimes\Phi}}{\|P\ket{\phi\otimes\Phi}\|}
\end{equation}
for $\ket{\phi}\in\mathcal{H}$, $\ket{\Phi}\in\wedge^{N-1}\mathcal{H}$ satisfying $(\bra{\phi}\otimes\mathds{1})\ket{\Phi}=0$.
\end{theorem}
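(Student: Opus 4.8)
The plan is to reduce the statement to a bound on a single partial inner product and then to analyze it with a Slater-determinant expansion adapted to the optimizing one-particle vector. First note that $\|\ket\psi\|_{2,1}^2=(\lambda^\psi_1)^2$ is the largest eigenvalue of the reduction $\rho^\psi_A$ in \eqref{rdm}, so by the variational principle for the top eigenvalue,
\[
\sup_{\|\psi\|=1}\|\ket\psi\|_{2,1}^2=\sup_{\|\psi\|=1}\ \sup_{\substack{\ket{\phi}\in\mathcal{H}\\\|\ket{\phi}\|=1}}\bra{\phi}\rho^\psi_A\ket{\phi}=\sup_{\|\psi\|=1}\ \sup_{\|\phi\|=1}\left\|(\bra{\phi}\otimes\mathds{1})\ket\psi\right\|^2 ,
\]
where $(\bra{\phi}\otimes\mathds{1})$ contracts the first tensor factor of $\ket\psi\in\wedge^N\mathcal{H}\subset\mathcal{H}\otimes(\otimes^{N-1}\mathcal{H})$. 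It therefore suffices to show $\|(\bra{\phi}\otimes\mathds{1})\ket\psi\|^2\leq 1/N$ for every normalized $\ket{\phi}\in\mathcal{H}$ and normalized $\ket\psi\in\wedge^N\mathcal{H}$, and to characterize equality.

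For the bound I would fix a normalized $\ket{\phi}$, extend it to an orthonormal basis $\ket{e_1}=\ket{\phi},\ket{e_2},\dots$ of $\mathcal{H}$, and expand $\ket\psi=\sum_{S}c_S\ket{e_S}$ in the Slater-determinant basis of Definition \ref{Slaterdef}, where $S$ runs over $N$-element index sets and $\ket{e_S}=\bigwedge_{i\in S}\ket{e_i}$. The key computation is the action of the contraction on a single determinant: $(\bra{\phi}\otimes\mathds{1})\ket{e_S}=0$ when $1\notin S$, while for $1\in S$ one gets $(\bra{\phi}\otimes\mathds{1})\ket{e_S}=\tfrac{1}{\sqrt{N}}\ket{e_{S\setminus\{1\}}}$ up to a sign, the factor $1/\sqrt N$ being exactly $\|P\ket{\phi\otimes\Phi}\|$ for a normalized $\ket{\Phi}\in\wedge^{N-1}\mathcal{H}$ with $(\bra{\phi}\otimes\mathds{1})\ket{\Phi}=0$ (cf.\ Definition \ref{atp}). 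Since the resulting vectors $\ket{e_{S\setminus\{1\}}}$ are orthonormal in $\wedge^{N-1}\mathcal{H}$, this yields
\[
\left\|(\bra{\phi}\otimes\mathds{1})\ket\psi\right\|^2=\frac{1}{N}\sum_{S\ni 1}|c_S|^2\leq\frac{1}{N}\sum_{S}|c_S|^2=\frac1N ,
\]
which is the Pauli principle: no single-particle mode is occupied by more than one fermion.

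For the equality case, the inequality above is saturated precisely when $c_S=0$ for every $S\not\ni 1$, i.e.\ when $\ket\psi$ is a superposition of Slater determinants that all contain the mode $\ket{\phi}$. The plan is then to factor out $\ket{\phi}$: setting $\ket{\Phi}:=\sqrt N\,(\bra{\phi}\otimes\mathds{1})\ket\psi=\sum_{S\ni 1}c_S\ket{e_{S\setminus\{1\}}}$, one sees that $\ket{\Phi}\in\wedge^{N-1}\mathcal{H}$ is built from the modes $\ket{e_2},\ket{e_3},\dots$ and hence satisfies $(\bra{\phi}\otimes\mathds{1})\ket{\Phi}=0$, and that $P\ket{\phi\otimes\Phi}=\tfrac1{\sqrt N}\ket\psi$, so $\ket\psi=\ket{\phi\wedge\Phi}$; the converse is immediate from the computation above. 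The main obstacle I anticipate is bookkeeping rather than conceptual: because the antisymmetric tensor product of Definition \ref{atp} is not linear, one must argue that $P$ nonetheless acts linearly on $\ket{\phi\otimes\Phi}$ here (it does, since every occupied mode of $\ket{\Phi}$ is orthogonal to $\ket{\phi}$) and keep the normalization factor $1/\sqrt N$ consistent throughout, so that the equality case matches the stated form $\ket{\phi\wedge\Phi}$ exactly.
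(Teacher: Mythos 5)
The paper itself offers no proof of this theorem---it is quoted from Coleman \cite{Coleman}---so there is nothing internal to compare against; judged on its own terms, your argument is correct and complete, and it is essentially the classical proof. The reduction of $(\lambda^\psi_1)^2$ to $\sup_{\|\phi\|=1}\|(\bra{\phi}\otimes\mathds{1})\ket{\psi}\|^2$ is right, the computation $(\bra{\phi}\otimes\mathds{1})\ket{e_S}=\pm N^{-1/2}\ket{e_{S\setminus\{1\}}}$ for $1\in S$ (and $0$ otherwise) is the correct normalization for Slater determinants as defined in Definition \ref{Slaterdef}, and orthonormality of the images gives the bound $\frac1N\sum_{S\ni1}|c_S|^2\le\frac1N$, i.e.\ the Pauli principle. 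Two small points you handle implicitly but should make explicit in a write-up: (a) the supremum over $\ket{\phi}$ is attained because $\rho^\psi_A$ is a (trace-class, here finite-rank) positive operator, so the basis adapted to the maximizing $\ket{\phi}$ exists and the equality analysis is legitimate; (b) in the equality case the identity $P\ket{\phi\otimes\Phi}=N^{-1/2}\ket{\psi}$ uses linearity of $P$ together with $P_N(\mathds{1}\otimes P_{N-1})=P_N$, which is exactly the bookkeeping you flag, and it goes through because every Slater determinant occurring in $\ket{\Phi}$ avoids the mode $\ket{\phi}$. The converse direction is, as you say, the same computation read backwards.
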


The 2-particle case is much harder, and the only clear-cut analytical result was obtained by Yang. 
\begin{theorem}[Yang \cite{Yang}, see \cite{Coleman,CLR} for simplified proofs]
\label{Yangeig}
Let $N,d$ be even. For $\mathcal{H}_A=\wedge^2\mathbb{C}^d$ and $\mathcal{H}_B=\wedge^{N-2}\mathbb{C}^d$,
\begin{equation}
\sup_{\substack{\ket{\psi}\in\wedge^{N}\mathbb{C}^d\\\|\ket\psi\|=1}} \|\ket\psi\|^2_{2,1}=\frac{1}{N-1}\frac{d-N+2}{d}.
\end{equation}
Any optimizer $\ket{\Psi^{(N)}}$ can be built from an orthonormal basis $\ket{\psi_i}$ by defining $\ket{\Psi^{(2)}}:=\sum^{d/2}_{i=1}\sqrt{2/d}\ket{\psi_{2i-1}\wedge\psi_{2i}}$ and
\begin{equation}
\label{Yang state}
\ket{\Psi^{(N)}}:=\ket{\Psi^{(2)}\wedge\dots\wedge\Psi^{(2)}}=\frac{P\ket{\Psi^{(2)}\otimes\dots\otimes\Psi^{(2)}}}{\|P\ket{\Psi^{(2)}\otimes\dots\otimes\Psi^{(2)}}\|},
\end{equation}
where $P$ is the orthogonal projection onto $\wedge^{N}\mathbb{C}^d$. Such states are known as Yang states.
\end{theorem}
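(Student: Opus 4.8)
\emph{The plan.} First I would translate the statement into second quantization. Since $k=1$ and $p=2$, the quantity $\|\ket\psi\|^2_{2,1}=(\lambda^\psi_1)^2$ is just the largest eigenvalue of $\rho^\psi_A$ from \eqref{rdm} with $\mathcal H_A=\wedge^2\mathbb C^d$, i.e.\ of the two-particle reduced density matrix $\rho^\psi_2$. Writing a normalized two-fermion test state $\ket g\in\wedge^2\mathbb C^d$ as $g^\dagger\ket0$ with the pair operator $g^\dagger=\tfrac12\sum_{ij}g_{ij}a^\dagger_i a^\dagger_j$ (here $g$ is antisymmetric with $\sum_{ij}|g_{ij}|^2=2$ and $a^\dagger_i,a_i$ are the fermionic creation and annihilation operators on $\mathbb C^d$), the standard identity $\binom N2\braket{g|\rho^\psi_2|g}=\bra\psi g^\dagger g\ket\psi$ (checked by summing $\ket g$ over a pair basis, where both sides give $\binom N2$) turns the supremum into a joint optimization,
\[
\binom N2\ \sup_{\ket\psi}(\lambda^\psi_1)^2=\sup_{g}\ \lambda_{\max}\!\left(g^\dagger g\big|_{\wedge^N\mathbb C^d}\right),
\]
over the test pair $g$ and the $N$-fermion state $\ket\psi$.

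Next I would canonicalize $g$. By the unitary invariance of the Schmidt norms (Section~\ref{Schmidtnormssect}), a single-particle unitary $a^\dagger_i\mapsto\sum_j U_{ji}a^\dagger_j$ leaves both sides unchanged, so I may put the antisymmetric matrix $g$ in block canonical form and write $g^\dagger=\sum_{k=1}^{d/2}\nu_k b^\dagger_k$ with $b^\dagger_k:=a^\dagger_{2k-1}a^\dagger_{2k}$ and $\sum_k\nu_k^2=1$. These commuting hard-core pair operators satisfy $[b_k,b^\dagger_l]=\delta_{kl}(1-n_k)$ with $n_k:=a^\dagger_{2k-1}a_{2k-1}+a^\dagger_{2k}a_{2k}$, so on the subspace with no singly occupied pair-mode they realize a spin-$\tfrac12$ algebra ($b^\dagger_k=S^+_k$, $n_k-1=2S^z_k$), and $g^\dagger g=S^+_\nu S^-_\nu$ with $S^\pm_\nu:=\sum_k\nu_k S^\pm_k$ acting on the fixed-magnetization sector $S^z_{\mathrm{tot}}=\tfrac12(N-d/2)$. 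For the uniform weight $\nu_k=\sqrt{2/d}$ this collapses to the $SU(2)$ $\eta$-pairing algebra, whose highest-weight (Dicke) state $(\eta^\dagger)^{N/2}\ket0$ is exactly the Yang state \eqref{Yang state}; a short computation from $S^+S^-=S^2-(S^z)^2+S^z$ with total spin $d/4$ returns the claimed value $\tfrac1{N-1}\tfrac{d-N+2}{d}$, establishing the lower bound and identifying the optimizer.

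The substantive task is the matching upper bound: no pair $g$ and no state $\ket\psi$ beat the uniform configuration, i.e.\ $\lambda_{\max}(S^+_\nu S^-_\nu)\le\tfrac{2}{d}\,\tfrac N2(\tfrac d2-\tfrac N2+1)$ on the fixed sector for every unit weight $(\nu_k)$. The hard point is genuinely the optimization over $\nu$: only the uniform weight closes into a clean $SU(2)$ algebra (for non-uniform $\nu$, $[S^+_\nu,S^-_\nu]=2\sum_k\nu_k^2 S^z_k$ is not proportional to $S^z_{\mathrm{tot}}$), and the fermionic hard-core constraint $n_k\le2$ is essential. Dropping it—for instance bounding $\lambda_{\max}$ of the positive Gram matrix $A_{kl}=\braket{S^-_k\psi|S^-_l\psi}$ by its trace $\Tr A=N/2$, or using positivity of $\sum_{k<l}(b_k-b_l)^\dagger(b_k-b_l)$—yields only the bosonic bound $N/2$ and loses the crucial factor $(d-N+2)/d$. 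Retaining that factor is precisely where Yang's original computation and the streamlined arguments of \cite{Coleman,CLR} do their work, effectively showing that $\lambda_{\max}(S^+_\nu S^-_\nu)$ is Schur-concave in $(\nu_k^2)$ and hence maximal at the uniform weight; I would reproduce one of these. This is the step I expect to be the main obstacle.

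Finally, the characterization of optimizers follows by tracking equality in that bound: it forces the uniform weight and the maximal-total-spin sector, hence $\ket{\Psi^{(N)}}=(\eta^\dagger)^{N/2}\ket0$ up to normalization, and undoing the single-particle unitary used to canonicalize $g$ shows that every optimizer is a Yang state \eqref{Yang state} built from some orthonormal basis, as claimed.
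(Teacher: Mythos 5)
First, a point of reference: the paper offers no proof of Theorem \ref{Yangeig}. It is quoted as a known result of Yang, with \cite{Coleman,CLR} cited for simplified proofs, so there is no in-paper argument to compare yours against. Judged on its own terms, your set-up is sound and the half you actually carry out is correct: the identity $\binom{N}{2}\braket{g|\rho^\psi_2|g}=\bra{\psi}g^\dagger g\ket{\psi}$, the Youla canonical form $g^\dagger=\sum_k\nu_k\,a^\dagger_{2k-1}a^\dagger_{2k}$, and the $\eta$-pairing $SU(2)$ computation $S^+S^-=S^2-(S^z)^2+S^z=\tfrac{N}{4}(d-N+2)$ on the Dicke state do reproduce the value $\frac{1}{N-1}\frac{d-N+2}{d}$ and identify the Yang state as attaining it. That establishes the lower bound.

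The genuine gap is the one you name yourself and then defer: the matching upper bound, uniform over every antisymmetric pair function $g$ and every $\ket\psi\in\wedge^{N}\mathbb{C}^d$, is the entire content of the theorem, and ``I would reproduce one of \cite{Coleman,CLR}'' is not a proof of it. As you correctly observe, the naive bounds (trace of the Gram matrix, positivity of $\sum_{k<l}(b_k-b_l)^\dagger(b_k-b_l)$) only give $N/2$ and lose the factor $(d-N+2)/d$; nothing in your outline recovers it. Two further steps are asserted rather than argued. First, the reduction of $g^\dagger g$ to the spin-$\tfrac12$ sector silently discards components of $\ket\psi$ with singly occupied pair-modes; one must show such components cannot increase $\bra\psi g^\dagger g\ket\psi$ (they are annihilated by $S^-_\nu$ mode by mode, but this needs to be turned into an inequality, since $\ket\psi$ is generally a superposition across occupation patterns). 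Second, the claimed Schur-concavity of $\lambda_{\max}(S^+_\nu S^-_\nu)$ in $(\nu_k^2)$ is a plausible gloss on why the uniform weight wins, but it is not how the cited proofs actually proceed and you give no argument for it; for non-uniform $\nu$ the operator does not close into any Lie algebra, which is precisely why Yang's and Coleman's combinatorial estimates (or the operator inequality of \cite{CLR}) are needed. Until one of those arguments is actually imported and the equality case tracked through it, neither the supremum nor the characterization of the optimizers is established.
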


No other results like this are known, but these are exactly the kind of questions that we can investigate with the algorithm. General statements of this form lead to interesting mathematical inequalities (see \cite{CLR,IKKS}), interesting states (the Yang states are examples of the BCS states that appear in superconducting systems \cite{Yang}), and they allow us to investigate the particle entanglement properties of fermions. These are some of the reasons that there appears to be a renewed interest in these questions: for example, the theorem above was recently rediscovered in the context of hard-core bosons \cite{Schilling}; a weaker version of the well-known bosonic analogue \cite{Takagi,Schmidtbosons} of Lemma \ref{YangYoula} was presented as a new result in \cite{LoFranco}; part of Conjecture \ref{conjentropies} was stated as a fact in \cite{Schoutens}; known $N$-representability constraints were put to use in quantum algorithms in \cite{Rubin}.

The conjectures below are all confirmed by the algorithm for up to 10 fermions.\footnote{See Table \ref{numbers} to get a sense of the number of iterations required and success rates.} We hope that their discussion will serve as a brief review of some interesting open questions and some useful past results.

The first conjecture generalizes Theorem \ref{Yangeig} to higher-order density matrices.

\begin{conjecture}[Yang \cite{Yang}, see \cite{CLR} for exact constants]
\label{Yangconj}
Let $N,\ d$ and $K\leq N$ be even. For $\mathcal{H}_A=\wedge^K\mathbb{C}^d$ and $\mathcal{H}_B=\wedge^{N-K}\mathbb{C}^d$, the Yang state $\ket{\Psi^{(N)}}$ is a maximizer of $\|.\|^2_{2,1}$ among the normalized vectors of $\wedge^N\mathbb{C}^d$.
For $d\to\infty$, the limiting value is $\tbinom{N/2}{K/2}/\tbinom{N}{K}$.
\end{conjecture}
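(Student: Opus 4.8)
The plan is to exploit the algebraic structure of the Yang states. Writing $\eta^\dagger:=\sum_{i=1}^{d/2}c^\dagger_{2i-1}c^\dagger_{2i}$ for the $\eta$-pairing operator built from a fixed orthonormal basis, one checks that the Yang state of Theorem \ref{Yangeig} is, up to normalization, $\ket{\Psi^{(N)}}\propto(\eta^\dagger)^{N/2}\ket{0}$, i.e.\ a uniform superposition of the $\binom{d/2}{N/2}$ Slater determinants in which each index pair $\{2i-1,2i\}$ is either fully occupied or empty. Since $\|\ket\psi\|^2_{2,1}$ is the largest eigenvalue of the $K$-particle reduced density matrix $\rho^\psi_K$ (the square of the top Schmidt coefficient), the statement splits into two tasks: (A) show that no normalized state in $\wedge^N\mathbb{C}^d$ has a larger top eigenvalue than $\ket{\Psi^{(N)}}$, and (B) evaluate the top eigenvalue of $\rho_K^{\Psi^{(N)}}$ and take $d\to\infty$.

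First I would carry out (B), which is a concrete computation. The operators $\eta^\dagger,\eta,\eta_z:=\tfrac12(\hat N-d/2)$ satisfy the $\mathrm{SU}(2)$ relations $[\eta_z,\eta^\dagger]=\eta^\dagger$ and $[\eta^\dagger,\eta]=2\eta_z$, with $\ket{0}$ the lowest-weight vector of the spin-$d/4$ representation; hence $(\eta^\dagger)^{N/2}\ket{0}$ is a normalized angular-momentum eigenstate and standard ladder-operator formulas give $\|\eta^{K/2}(\eta^\dagger)^{N/2}\ket{0}\|^2$ in closed form. The natural candidate for the top Schmidt vector is the smaller Yang state $\ket{\Psi^{(K)}}\propto(\eta^\dagger)^{K/2}\ket{0}$; writing its creation operator $A^\dagger_{\Psi^{(K)}}\propto(\eta^\dagger)^{K/2}$ (with $A^\dagger_{\Psi^{(K)}}\ket{0}=\ket{\Psi^{(K)}}$) and using the standard relation $\braket{\phi|\rho_K|\phi}=\braket{\Psi^{(N)}|A_\phi^\dagger A_\phi|\Psi^{(N)}}/\binom{N}{K}$, the above formulas yield the explicit matrix element
\begin{equation}
\braket{\Psi^{(K)}|\rho_K^{\Psi^{(N)}}|\Psi^{(K)}}=\frac{1}{\binom{N}{K}\,(K/2)!^2\binom{d/2}{K/2}}\frac{(N/2)!}{((N-K)/2)!}\frac{((d-N+K)/2)!}{((d-N)/2)!},
\end{equation}
which reduces to Yang's value $\tfrac{1}{N-1}\tfrac{d-N+2}{d}$ when $K=2$ and tends to $\binom{N/2}{K/2}/\binom{N}{K}$ as $d\to\infty$. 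To be sure this matrix element is the \emph{top} eigenvalue, I would use that $\ket{\Psi^{(N)}}$ is invariant under the symplectic group $\mathrm{Sp}(d)$ preserving the form $\sum_i e_{2i-1}\wedge e_{2i}$, so $\rho_K$ commutes with the $\mathrm{Sp}(d)$-action and is constant on each block of the multiplicity-free decomposition $\wedge^K\mathbb{C}^d=\bigoplus_{r\ge0}\wedge^{K-2r}_0\mathbb{C}^d$ into primitive (traceless) parts. The vector $\ket{\Psi^{(K)}}$ spans the trivial block $r=K/2$, and an off-diagonal-long-range-order estimate---each unpaired fermion contributes a one-particle occupation of order $1/d$---shows that the eigenvalues on the blocks $r<K/2$ vanish as $d\to\infty$, so the trivial block dominates and (B) follows.

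The hard part is (A), the global optimality, which is precisely why the statement is only a conjecture. Interchanging the two suprema, the quantity to bound is $\sup_\phi\lambda_{\max}\!\big(A_\phi^\dagger A_\phi\big|_{\wedge^N\mathbb{C}^d}\big)$ over normalized $K$-particle states $\phi$, and the claim is that it is maximized by the $\eta$-paired choice $\phi=\ket{\Psi^{(K)}}$. For $K=2$ this is Yang's theorem, proved in \cite{CLR} by first bringing the antisymmetric two-form $\phi$ to its canonical block-diagonal (Youla) form by a one-particle unitary and then estimating the resulting pair operator. The obstacle for $K\ge4$ is exactly that antisymmetric $K$-tensors admit no such normal form---the $\mathrm{GL}(d)$-classification of $\wedge^K\mathbb{C}^d$ is already of wild representation type---so the reduction that makes the $K=2$ argument work has no direct analogue. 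A successful proof would therefore need either a normal-form-free operator inequality bounding $A_\phi^\dagger A_\phi$ on $\wedge^N\mathbb{C}^d$ uniformly in $\phi$, or an induction in $K$ relating the $K$- and $2$-particle reduced density matrices; lacking these, the optimality remains the genuine open problem, consistent with the numerical confirmation reported in Table \ref{numbers}.
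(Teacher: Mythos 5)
You should note at the outset that the paper offers no proof of this statement: it is presented as a \emph{conjecture} (attributed to Yang, with exact constants from \cite{CLR}) and is only verified numerically with the algorithm for up to 10 fermions (see Table \ref{numbers}). There is therefore no paper proof to compare yours against; the only question is whether your proposal closes the conjecture, and by your own account it does not. Your part (B) is sound and checkable: the $\eta$-pairing $\mathrm{SU}(2)$ algebra gives $\|(\eta^\dagger)^{n}\ket{0}\|^{2}=n!\,(d/2)!/\big((d/2)-n\big)!$, your matrix element correctly reduces to Yang's value $\tfrac{1}{N-1}\tfrac{d-N+2}{d}$ at $K=2$ and tends to $\binom{N/2}{K/2}/\binom{N}{K}$ as $d\to\infty$, and the $\mathrm{Sp}(d)$-invariance argument (multiplicity-free decomposition of $\wedge^K\mathbb{C}^d$, so $\rho_K$ is scalar on each block) is the right way to see that this matrix element is an exact eigenvalue. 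One caveat: you only argue that the trivial block dominates \emph{asymptotically}; to assert that $\braket{\Psi^{(K)}|\rho_K|\Psi^{(K)}}$ is the top Schmidt coefficient of the Yang state at finite $d$ you would need a finite-$d$ comparison of the block scalars, which is what \cite{CLR} supplies. For the limiting value as stated in the conjecture, your argument suffices.

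Part (A) is where the conjecture actually lives, and you have correctly identified it rather than solved it. This matches the paper's own diagnosis: the $K=2$ case (Theorem \ref{Yangeig}) goes through because Lemma \ref{YangYoula} forces a pairing structure on any two-fermion state, whereas for general even $K$ the paper states that such pairing is generically absent and that ``proving that the maximizer possesses it is truly the biggest challenge,'' with arguments like \cite{Luo} getting close only once pairing is assumed; Yang's own result \cite{Yang2} via Sasaki's formula is described as a watered-down version. Your observation that antisymmetric $K$-tensors for $K\geq4$ admit no Youla-type normal form is exactly this obstruction. In short: your write-up is a correct and useful account of what is known (the value attained by the Yang state) and of what is open (its global optimality), but it is not a proof of the statement --- nor does the paper claim to possess one.
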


Yang himself \cite{Yang2} proved a version of this conjecture that follows easily from a very useful antisymmetrization formula obtained by Sasaki \cite{Sasaki, Coleman}, but, as explained in \cite{CLR}, his result amounts to a watered-down version of the full conjecture. Finding a complete proof is profoundly more difficult than it is in the $K=2$ case, for which the following lemma is essential.
\begin{lemma}[Yang \cite{Yang}, Youla \cite{Youla}]
\label{YangYoula}
For a fermionic state $\ket\psi\in\wedge^2\mathcal{H}$, the Schmidt decomposition \eqref{Schmidt} takes the special form
\begin{equation}
\sum_i\lambda_i \sqrt{2}\ket{\psi_{2i-1}}\wedge\ket{\psi_{2i}},
\end{equation}
where the $\ket{\psi_i}$ form an orthonormal basis of $\mathcal{H}$.
\end{lemma}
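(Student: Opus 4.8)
The plan is to reduce the statement to the canonical form of a complex antisymmetric matrix under unitary congruence, which is the classical result of Youla (and, in this physical guise, of Yang). First I would fix an orthonormal basis $\ket{e_1},\ket{e_2},\dots$ of $\mathcal H$ and write $\ket\psi=\sum_{ij}A_{ij}\ket{e_i}\otimes\ket{e_j}$. Since $\ket\psi\in\wedge^2\mathcal H$ we have $U_{(1\,2)}\ket\psi=-\ket\psi$, which is exactly the condition $A_{ji}=-A_{ij}$, i.e.\ $A^T=-A$. A short computation shows that the reduction \eqref{rdm} is $\rho^\psi_A=AA^\dagger$, so the squared Schmidt coefficients are the squared singular values of $A$. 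If $\dim\mathcal H=\infty$, then $\rho^\psi_A$ is trace class, hence compact, so its nonzero spectrum is discrete and the construction below simply yields a countable family of pairs without any new difficulty; I will therefore argue as if $\mathcal H$ were finite-dimensional.

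The key observation is that passing to a new orthonormal basis $\ket{f_a}=\sum_i W_{ia}\ket{e_i}$ (with $W$ unitary) replaces $A$ by $A'=V^T A V$ with $V=\bar W$ unitary; that is, a change of basis acts on $A$ by unitary \emph{congruence}, not similarity, and $A'$ is again antisymmetric. The whole statement thus amounts to finding a basis in which $A'$ is block diagonal with $2\times2$ blocks $\begin{pmatrix}0&\lambda_i\\-\lambda_i&0\end{pmatrix}$, $\lambda_i\geq0$ (padded by a single zero entry in odd dimension). Given such an $A'$, reading off $\ket\psi=\sum_{ab}A'_{ab}\ket{f_a}\otimes\ket{f_b}$ gives at once $\ket\psi=\sum_i\lambda_i(\ket{f_{2i-1}}\otimes\ket{f_{2i}}-\ket{f_{2i}}\otimes\ket{f_{2i-1}})=\sum_i\lambda_i\sqrt2\,\ket{f_{2i-1}\wedge f_{2i}}$, which is the claimed form with $\ket{\psi_a}:=\ket{f_a}$.

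The heart of the proof is the canonical form itself, and this is the step I expect to be the main obstacle. I would establish it from the structural fact that the singular values of an antisymmetric matrix occur with even multiplicity. On the eigenspace $E_\lambda$ of $A^\dagger A$ for a fixed $\lambda^2>0$, I would introduce the antilinear map $C\ket x:=\overline{A\ket x}/\lambda$; using $A^\dagger=-\bar A$ one checks that $C$ maps $E_\lambda$ into itself, is antiunitary in the sense $\braket{Cx|Cy}=\braket{y|x}$, and satisfies $C^2=-\mathds 1$. Such a quaternionic structure forces $\dim E_\lambda$ to be even and lets one pick an orthonormal basis of $E_\lambda$ in pairs $\{\ket x,C\ket x\}$ with $\braket{x|Cx}=0$ (the orthogonality follows from antiunitarity and $C^2=-\mathds1$). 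Collecting these pairs over all $\lambda>0$, together with any orthonormal basis of $\ker A$, can be shown to produce the unitary congruence that puts $A$ into the desired block form. Equivalently, one may argue by induction, peeling off at each step the two-dimensional block associated with the largest singular value and checking that its orthogonal complement is again antisymmetric.

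Finally I would verify that the form obtained really is a Schmidt decomposition in the sense of \eqref{Schmidt}. Expanding the wedges, the left Schmidt vectors are $\ket{\psi_1},\ket{\psi_2},\dots$ and the right ones $\ket{\psi_2},-\ket{\psi_1},\dots$, both orthonormal sets, with nonnegative coefficients $\lambda_1,\lambda_1,\lambda_2,\lambda_2,\dots$. Hence the decomposition is genuine and, as a byproduct, the Schmidt spectrum of any fermionic two-particle state is doubly degenerate—the structural fact underlying Theorem \ref{Yangeig}. The only routine points left are the reduction to the discrete spectrum in infinite dimensions and the direct verification of the antiunitarity and $C^2=-\mathds1$ identities, both of which follow from $A^T=-A$.
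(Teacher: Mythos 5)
The paper gives no proof of Lemma \ref{YangYoula}---it is quoted as a known result, with the citation to Youla pointing precisely at the canonical form of a complex antisymmetric matrix under unitary congruence. Your argument is a correct, self-contained reconstruction of exactly that route (antisymmetric coefficient matrix, congruence rather than similarity under basis change, even multiplicity of singular values via the antiunitary $C$ with $C^2=-\mathds{1}$, and the final check that the paired form is a genuine Schmidt decomposition with doubly degenerate spectrum), so it matches the intended proof.
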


\begin{remark}
Similarly, the analogue for bosonic states \cite{Schmidtbosons} in $\otimes^2_{s}\mathcal{H}$ (effectively Takagi factorization \cite{Takagi}) says that the Schmidt decomposition can always be written as a sum of terms $\lambda_i \ket{\psi_{i}}\otimes\ket{\psi_{i}}$.

For $N$ even, note that $\wedge^N\mathcal{H}\subset\otimes^2_{s}(\wedge^{N/2}\mathcal{H})$ if $N/2$ is even and $\wedge^N\mathcal{H}\subset\wedge^2(\wedge^{N/2}\mathcal{H})$ if it is odd. Therefore, the Schmidt decomposition of a state in $\wedge^N\mathcal{H}$ with $\mathcal{H}_A=\mathcal{H}_B=\wedge^{N/2}\mathcal{H}$ has a structure similar to one of those mentioned above, depending on whether $N/2$ is even or odd. This is one of the few cases in which the consequences of antisymmetry are partly evident. 
\end{remark}

Returning to the topic of Theorem \ref{Yangeig}, Lemma \ref{YangYoula} provides us with a pairing structure for one of the Schmidt vectors, which can then be used to find the optimizer \eqref{Yang state}. For general even $K$ such pairing is generically absent, and proving that the maximizer possesses it is truly the biggest challenge in overcoming Conjecture \ref{Yangconj}: if we assume pairing, arguments like \cite{Luo} get close to the full answer. Note that there is a strong intuition for pairing: it turns the fermions into hard-core bosons (in a particular basis), and these can be condensed in the same state as long as we spread out their wave functions (see \cite{Schilling}). The Yang state also has an interesting symmetry (see Theorem \ref{Yangsymmetry} below), and the fixed point equation \eqref{fixedpoint} takes on a particularly simple form---perhaps these can shed new light on this problem.

Moving on to two Schmidt coefficients, the following conjecture is easily obtained with the algorithm.
\begin{conjecture}
\label{myconjecture}
Let $N$ and $d$ be even. For $\mathcal{H}_A=\wedge^2\mathbb{C}^d$ and $\mathcal{H}_B=\wedge^{N-2}\mathbb{C}^d$,
\begin{equation}
\sup_{\substack{\ket{\psi}\in\wedge^{N}\mathbb{C}^d\\\|\ket\psi\|=1}} \|\ket\psi\|^2_{2,2}=\frac{1}{\tbinom{N}{2}}\left[1+\frac12(N-2)\frac{d-N+2}{d-2}\right].
\end{equation}
Consider a basis consisting of $\ket{\phi_1},\ket{\phi_2}$, and $\ket{\psi_i}$ with $1\leq i\leq d-2$, and a Yang state $\ket{\Psi^{(N-2)}}$ \eqref{Yang state} built from the $\ket{\psi_i}$. An optimizer is then given by
\begin{equation}
\label{2eigs}
\ket{\phi_1\wedge\phi_2\wedge\Psi^{(N-2)}}=\frac{P\ket{\phi_1\otimes\phi_2\otimes\Psi^{(N-2)}}}{\|P\ket{\phi_1\otimes\phi_2\otimes\Psi^{(N-2)}}\|},
\end{equation}
where $P$ is the orthogonal projection onto $\wedge^{N}\mathbb{C}^d$.
\end{conjecture}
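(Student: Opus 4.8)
The plan is to separate the statement into a variational reduction, an explicit evaluation on the candidate \eqref{2eigs}, and a matching upper bound, with the last being the genuine difficulty.

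First I would recast the objective. Since the $(\lambda^\psi_i)^2$ are the eigenvalues of the two-particle reduction $\rho^\psi_2$ appearing in \eqref{rdm} (here $\mathcal{H}_A=\wedge^2\mathbb{C}^d$), the norm $\|\ket{\psi}\|^2_{2,2}=(\lambda^\psi_1)^2+(\lambda^\psi_2)^2$ is the sum of the two largest eigenvalues of $\rho^\psi_2$. By the Ky Fan maximum principle this equals $\max_\Pi\Tr[\Pi\rho^\psi_2]$, the maximum taken over rank-two orthogonal projections $\Pi$ on $\wedge^2\mathbb{C}^d$. Writing $\hat\Pi:=\sum_{i<j}\Pi_{ij}$ for the associated two-body operator (as in \eqref{intera}, with $\Pi_{ij}$ acting on particles $i,j$), the identity \eqref{intera} gives $\Tr[\Pi\rho^\psi_2]=\tbinom{N}{2}^{-1}\bra{\psi}\hat\Pi\ket{\psi}$. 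Since $\hat\Pi$ is permutation invariant it preserves $\wedge^N\mathbb{C}^d$, so exchanging the two suprema yields the reformulation
\begin{equation}
\sup_{\substack{\ket{\psi}\in\wedge^{N}\mathbb{C}^d\\\|\ket\psi\|=1}}\|\ket\psi\|^2_{2,2}=\frac{1}{\tbinom{N}{2}}\sup_{\Pi}\ \lambda_{\max}\big(\hat\Pi|_{\wedge^N\mathbb{C}^d}\big),
\end{equation}
where $\lambda_{\max}$ is the largest eigenvalue. This converts the problem into optimizing the top eigenvalue of a two-body operator, exactly the form in which Yang's $k=1$ result (Theorem \ref{Yangeig}) is proved.

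For the lower bound I would diagonalize $\rho^\psi_2$ for the candidate \eqref{2eigs}. Because $\ket{\phi_1},\ket{\phi_2}$ are always occupied and orthogonal to the orbitals carrying $\ket{\Psi^{(N-2)}}$, the reduction is block diagonal in the number of particles lying in $\mathrm{span}\{\ket{\phi_1},\ket{\phi_2}\}$. The geminal $\ket{\phi_1\wedge\phi_2}$ is then an eigenvector with eigenvalue exactly $\tbinom{N}{2}^{-1}$, while the spread pair $\ket{g}:=\sum_{i}\sqrt{2/(d-2)}\,\ket{\psi_{2i-1}\wedge\psi_{2i}}$ inherited from $\ket{\Psi^{(N-2)}}$ is a second eigenvector; applying Theorem \ref{Yangeig} in dimension $d-2$ to the factor $\ket{\Psi^{(N-2)}}$ and accounting for the $\tbinom{N}{2}$ normalization gives its eigenvalue $\tbinom{N}{2}^{-1}\tfrac12(N-2)\tfrac{d-N+2}{d-2}$. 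A direct estimate of the remaining cross and subleading geminals shows that these two dominate, so their sum equals the claimed right-hand side. This step is routine given the antisymmetric tensor product of Definition \ref{atp} and Sasaki's antisymmetrization formula for computing the relevant norms.

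The upper bound is the crux. Writing $\Pi=\ketbra{g_1}{g_1}+\ketbra{g_2}{g_2}$ with $\ket{g_1},\ket{g_2}\in\wedge^2\mathbb{C}^d$ orthonormal, one must bound $\lambda_{\max}(\hat\Pi|_{\wedge^N})=\sup_\psi\bra{\psi}(\hat n_{g_1}+\hat n_{g_2})\ket{\psi}$, where $\hat n_{g}:=\sum_{i<j}(\ketbra{g}{g})_{ij}$ is the pair-number operator of the geminal $g$. Applying Yang's single-geminal bound to each term separately is far too weak, so the orthogonality of $g_1$ and $g_2$ must be exploited jointly. The natural route is to impose the stationarity/fixed-point equation \eqref{fixedpoint} (with $p=2$, $k=2$) on a maximizer and deduce that the optimal projection has precisely the structure of one localized Slater pair $\ket{\phi_1\wedge\phi_2}$ together with one maximally spread Yang pair. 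Establishing this pairing is exactly the obstacle that makes Conjecture \ref{Yangconj} hard: the $k=1$ problem is tractable only because Lemma \ref{YangYoula} supplies the Youla pairing of a single geminal automatically, whereas here one must control two orthogonal geminals simultaneously, with no a priori reason for the maximizer to pair. I expect this to be the main difficulty, and it is presumably why only numerical confirmation is currently available.
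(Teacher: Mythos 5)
The statement you are asked to prove is labelled a \emph{conjecture} in the paper, and the paper supplies no proof: its only evidence is numerical, namely that the iteration \eqref{iteration} run on $\wedge^N\mathbb{C}^d$ with $p=2$, $k=2$ converges to the value and the state \eqref{2eigs} (see Table \ref{numbers}). So there is no proof of the paper's to compare yours against, and your submission is, correctly, not a proof either but a strategy with the decisive step left open. Within that framing your outline is sound. The Ky Fan reformulation as $\sup_\Pi\lambda_{\max}(\hat\Pi|_{\wedge^N})$ over rank-two projections $\Pi$ on $\wedge^2\mathbb{C}^d$ is the standard and correct way to set the problem up, and your lower-bound computation checks out against the claimed constant: the Slater geminal $\ket{\phi_1\wedge\phi_2}$ carries weight $\tbinom{N}{2}^{-1}$, and the Yang geminal of $\ket{\Psi^{(N-2)}}$ carries weight $\tfrac{\binom{N-2}{2}}{\binom{N}{2}}\cdot\tfrac{1}{N-3}\tfrac{d-N+2}{d-2}=\tbinom{N}{2}^{-1}\tfrac12(N-2)\tfrac{d-N+2}{d-2}$, which is exactly the second term (note that for $N\geq6$ this second eigenvalue actually exceeds the first, so the ordering of $\lambda_1,\lambda_2$ is not what your phrasing suggests, though the sum is unaffected); the cross geminals $\phi_a\wedge\psi_j$ have occupation of order $(N-2)/((d-2)\tbinom{N}{2})$ and are subleading for $N<d$.

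The gap is the one you name yourself: the matching upper bound. Your proposed route---impose the fixed-point equation \eqref{fixedpoint} on a maximizer and deduce that the optimal rank-two projection consists of one localized Slater pair plus one spread Yang pair---is plausible but is precisely the pairing-structure problem that blocks Conjecture \ref{Yangconj}; the fixed-point equation is a necessary condition satisfied by many states and does not by itself single out this structure. Lemma \ref{YangYoula} canonically pairs a \emph{single} geminal, but gives no control over two orthogonal geminals simultaneously, and no argument in the paper or in the cited literature supplies this. So your proposal should be read as an accurate diagnosis of why the statement is a conjecture, together with a verified lower bound, rather than as a proof.
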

Surprisingly, the limiting value for $d\to\infty$ is $(N-1)^{-1}$: exactly the same as the maximum of $\|.\|^2_{2,1}$ in that limit. This shows that eigenvalues of fermionic reduced density matrices are coupled very strongly. Coleman \cite{Coleman2} had already predicted a ($d$-independent) deviation of $O(N^{-2})$ from $O((N-1)^{-1})$, and studied a general class of states containing \eqref{Slaterdet}, \eqref{Yang state} and \eqref{2eigs} called (generalized) \textit{antisymmetric geminal powers} (AGP, \cite{Coleman2,Colemanbook}) that demonstrate similar coupling between eigenvalues. These states are seemingly very important for these kinds of problems, but the exact role they play in the fermionic subspace is still unknown. 

To study entanglement, we move on to problems involving all Schmidt coefficients. 
\begin{conjecture}[partially \cite{Schoutens},\cite{CLR},\cite{Lemm}]
\label{conjentropies}
For $\ket\psi\in\wedge^N\mathcal{H}$ normalized and $\mathcal{H}_A=\wedge^K\mathcal{H}$ and $\mathcal{H}_B=\wedge^{N-K}\mathcal{H}$, we have for the entropies \eqref{renent} and \eqref{vnent},
\begin{equation}
S_2(\rho^\psi_A), S(\rho^\psi_A)\geq \log\binom{N}{K}.
\end{equation}
That is, both entropies are minimized by reduced density matrices of Slater determinants \eqref{Slaterdet}.
\end{conjecture}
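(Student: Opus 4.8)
The plan is to split the statement into an easy achievability part and a hard lower bound, and then to reduce the lower bound for \emph{both} entropies to a single purity inequality. Achievability is a direct computation: for a Slater determinant $\ket{\phi_1\wedge\dots\wedge\phi_N}$ the reduction \eqref{rdm} to $\mathcal{H}_A=\wedge^K\mathcal{H}$ is $1/\binom{N}{K}$ times the orthogonal projection onto the $\binom{N}{K}$-dimensional span of the $K$-particle Slater determinants built from $\phi_1,\dots,\phi_N$. Its spectrum is flat, so $S(\rho_A^\psi)=S_2(\rho_A^\psi)=\log\binom{N}{K}$ and both bounds are saturated.

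The key reduction is to deduce the von Neumann bound from the $S_2$ bound for free. Since the R\'enyi entropies $S_\alpha(\rho)$ in \eqref{renent} are non-increasing in $\alpha$, we have $S(\rho_A^\psi)=S_1(\rho_A^\psi)\ge S_2(\rho_A^\psi)$, so it suffices to prove the single inequality $S_2(\rho_A^\psi)\ge\log\binom{N}{K}$. This is exactly the purity bound
\[
\Tr[(\rho_A^\psi)^2]=\sum_i(\lambda_i^\psi)^4=\|\ket\psi\|^4_{4,n_s}\le\frac{1}{\binom{N}{K}},
\]
equivalently $\Tr[(D^{(K)})^2]\le\Tr[D^{(K)}]=\binom{N}{K}$ for the unnormalized $K$-particle reduced density matrix $D^{(K)}=\binom{N}{K}\rho_A^\psi$. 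Thus the whole conjecture collapses to this one statement about $D^{(K)}$.

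To prove the purity bound I would first dispose of $K=1$, where it is immediate: the Pauli principle gives $0\le D^{(1)}\le\mathds{1}$, hence $(D^{(1)})^2\le D^{(1)}$ and $\Tr[(D^{(1)})^2]\le\Tr[D^{(1)}]=N$; in fact every eigenvalue of $\rho_A^\psi$ is then $\le1/N$, so $\lambda(\rho_A^\psi)$ is majorized by the uniform distribution on $N$ points and $S(\rho_A^\psi)\ge\log N$ even follows directly by Schur-concavity. For general $K$ I would write the purity in two-copy form,
\[
\Tr[(\rho_A^\psi)^2]=\bra{\psi\otimes\psi}\,\mathrm{SWAP}_A\,\ket{\psi\otimes\psi},
\]
where $\mathrm{SWAP}_A$ exchanges the $\wedge^K\mathcal{H}$ factors of the two copies. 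Realising $\ket{\psi\otimes\psi}\in\otimes^{2N}\mathcal{H}$ and using that each copy is totally antisymmetric, $\mathrm{SWAP}_A$ becomes the slot permutation $(1\,N{+}1)\cdots(K\,N{+}K)$, which I would average over the stabilizing $S_N\times S_N$ with signs. This is precisely the setting of Sasaki's antisymmetrization formula \cite{Sasaki} that underlies the analysis of the Yang state, and it reduces the purity to a signed combinatorial sum over double cosets that one then hopes to bound by $1/\binom{N}{K}$, taking the $K=2$ result of Theorem \ref{Yangeig} as the base input.

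The hard part will be the estimation in this last step, and it is genuinely sharp. Unlike $K=1$, for $K\ge2$ the largest eigenvalue of $\rho_A^\psi$ can exceed $1/\binom{N}{K}$ (for $K=2$ it approaches $1/(N-1)$ by Theorem \ref{Yangeig}), so $D^{(K)}\not\le\mathds{1}$ and the clean majorization argument is unavailable; the inequality $\Tr[D^{(K)}(D^{(K)}-\mathds{1})]\le0$ must instead be extracted from the way antisymmetry forces the remainder of the spectrum to spread out and compensate the large eigenvalue. Controlling this compensation---equivalently, showing the antisymmetrized combinatorial sum never overshoots---is the crux, and it is why a complete proof for all $K$ (beyond the $K=1,2$ cases and the numerical evidence in Table \ref{numbers}) remains delicate. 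A natural fallback would be an induction on $K$ built from the contraction relation $\Tr_{\text{last}}[D^{(K)}]=(N-K+1)D^{(K-1)}$ together with the Pauli input at $K=1$, but coupling such partial-trace relations to the convex (and hence contraction-unfriendly) purity is itself the main obstacle.
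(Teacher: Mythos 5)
This statement is a \emph{conjecture} in the paper, not a theorem: the paper offers no proof, only numerical support from the algorithm (Table \ref{numbers}) and pointers to partial results in the cited references, so there is no paper proof to match your attempt against. The question is therefore whether your proposal settles the conjecture, and it does not --- as you yourself concede. The parts you do carry out are correct and standard: the Slater reduction is $\binom{N}{K}^{-1}$ times a rank-$\binom{N}{K}$ projection, so both entropies equal $\log\binom{N}{K}$ there; monotonicity of $\alpha\mapsto S_\alpha$ gives $S=S_1\ge S_2$, so everything reduces to the purity bound $\Tr[(\rho^\psi_A)^2]\le 1/\binom{N}{K}$ (the paper itself records this equivalent form immediately after the conjecture); and the $K=1$ case follows from $0\le D^{(1)}\le\mathds{1}$. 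None of this is the content of the conjecture.

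The genuine gap is the purity bound for $K\ge2$, and your sketch of that step has two concrete problems. First, Theorem \ref{Yangeig} cannot serve as the ``base input'' for $K=2$: it bounds only the largest eigenvalue, by roughly $1/(N-1)$, and the best one can extract from that is $\Tr[(\rho^\psi_A)^2]\le\lambda_{\max}\le 1/(N-1)$, which is a factor $N/2$ weaker than the required $2/(N(N-1))$; a largest-eigenvalue bound does not imply a purity bound of the conjectured order, precisely because (as you note) the spectrum must be shown to spread out below the top. So $K=2$ is not ``done'' by the quoted theorem. Second, the swap-trick identity $\Tr[(\rho^\psi_A)^2]=\bra{\psi\otimes\psi}\mathrm{SWAP}_A\ket{\psi\otimes\psi}$ followed by Sasaki-type antisymmetrization yields a signed sum over double cosets in which the phrase ``one then hopes to bound'' carries the entire burden: the terms are not individually sign-definite or individually small, and no mechanism is proposed for the cancellation. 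Likewise, the fallback induction via $\Tr_{\mathrm{last}}[D^{(K)}]=(N-K+1)D^{(K-1)}$ founders exactly where you say it does, since partial trace gives no control on $\Tr[(D^{(K)})^2]$ in terms of $\Tr[(D^{(K-1)})^2]$. In short, your write-up is an accurate reduction of the conjecture to its known hard core, plus a correct account of why that core is hard --- but it is not a proof, and should not be presented as one.
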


This says that Slaters have the lowest particle entanglement among all the fermionic states. That may seem like a fairly safe claim, but the mathematics behind it is not at all trivial. For instance, the 2-R\'enyi entropy conjecture is equivalent to
\begin{equation}
\Tr[(\rho^\psi_A)^2]\leq \frac{1}{\tbinom{N}{K}}.
\end{equation}
For $K$ small compared to $N$, this is $O(N^{-K})$, so that we see that a fermionic $K$-particle reduced density matrix can have a most $O(N^a)$ eigenvalues of $O(N^{-(K+a)/2})$, which gives a highly non-trivial bound for $0\leq a\leq K$. This fits with Theorem \ref{Yangeig} and Conjecture \ref{Yangconj} for $a=0$, and it provides another example of the constraints that antisymmetry puts on the Schmidt coefficients. 

Our current understanding of these effects is rather limited, but they are fundamental to the structure of fermionic states, and hence to quantum chemistry and condensed matter. We hope that this section helps to renew interest in a challenging topic that has remained challenging despite the attention it rightfully received in the sixties.

\subsection{Absolutely maximally entangled states}
\label{AME}
An $N$-partite state $\ket\psi\in\otimes^N\mathbb{C}^d$ is called \textit{absolutely maximally entangled} (AME) or a \textit{perfect tensor} if all possible reductions to $\lfloor N/2\rfloor$ parties are maximally mixed \cite{Huber2,perfecttensor}. These states have attracted some attention in the context of quantum error-correcting codes, but we ignore questions about their relevance and focus on their existence for given $N$ and $d$ instead. It turns out this greatly depends on these parameters: for $d=2$, AME states only exist for $N=2,3,5,6$; for $d=3$, they exist for $N=2,3,4,5,6,7,9,10$ and possibly 11 and 15 \cite{Hubersite}.

How can the algorithm help to decide on existence? The reduction \eqref{rdm} to $\lfloor N/2\rfloor$ parties is related to the Schmidt decomposition \eqref{Schmidt} with $\mathcal{H}_A=\otimes^{\lfloor N/2\rfloor}\mathbb{C}^d$ and $\mathcal{H}_B=\otimes^{\lceil N/2\rceil}\mathbb{C}^d$. If $\ket\psi$ is maximally mixed, all its $n_s=d^{\lfloor N/2\rfloor}$ Schmidt coefficients $(\lambda^\psi_i)^2$ equal $1/n_s$. It is easy to see that the Cauchy--Schwarz bound
\begin{equation}
\|\ket\psi\|_{1,n_s}=\sum^{n_s}_{i=1} \lambda^\psi_i \leq \sqrt{n_s}
\end{equation}
is satisfied, and only satisfied in that case. In other words, to obtain an AME state, we need to maximize $\|\ket\psi\|_{1,n_s}$ across $\binom{N}{\lfloor N/2\rfloor}$ cuts at the same time, and we can use the general form of Corollary \ref{morecuts} with $P=\mathds{1}$, $p=1$ and $k=l=n_s$. This is particularly interesting in the cases $d=6,\ N=4$ and $d=4,\ N=7$ where existence of AME states is undecided \cite{Hubersite}, but unfortunately these are out of reach---seemingly not because of computational power, but because of the dominance of other fixed points (see e.g.\ the behaviour for $d=5,\ N=4$ in Table \ref{numbers}). In any case,  the algorithm quickly finds the known AME states for $d=2$; the ones with $N\leq7$ for $d=3$; with $N\leq4$ for $d=4$, which is a good example of its ability both to maximize entanglement and to treat several cuts at the same time. The method can also be applied to similar concepts, such as \textit{locally maximally entangled} states \cite{Raamsdonk}, and mixed-dimensional AME states \cite{Huber}.\footnote{In \cite{Huber}, it was shown that mixed-dimensional AME states exist for four subsystems and dimensions 2x3x3x3, but not 2x2x2x3 and 2x2x3x3; the algorithm shows that they also exist in the (novel) cases 2x2x2x4, 2x2x3x4, 2x2x4x4, 2x3x3x4, 2x3x4x4, 2x4x4x4, 3x3x4x4, 3x4x4x4, and strongly suggests they do not exist for 3x3x3x4, covering all cases with maximal dimension 4. It can successfully be applied in some cases with higher dimensions as well.}

\subsection{Dimensions of varieties of pure states}
\label{varieties}
How large can a subspace $U\subset\mathcal{H}_A\otimes\mathcal{H}_B$ be if we do not want it to contain states that have $\leq r$ non-zero Schmidt coefficients, or in other words, \textit{Schmidt rank} $\leq r$? The answer is $(d_A-r)(d_B-r)$, as shown by Cubitt, Montanaro and Winter \cite{Cubitt}. It is possible to heuristically reach this prediction: the (affine algebraic) variety of states with Schmidt rank $\leq r$ has dimension $r(d_A+d_B-r)$ (see point 1.\ below) and so intuitively the largest dimension of a subspace that does not intersect the variety in any point other than the origin---the answer to the question---is indeed $d_Ad_B-r(d_A+d_B-r)=(d_A-r)(d_B-r)$. Such intuition can be made rigorous with standard results in algebraic geometry as both the subspace and this variety are essentially projective varieties, see \cite{Parthasarathy, Cubitt, Harris, Shafarevich}.

It may seem like an unusual application, but we could have answered this question numerically.  The states with Schmidt rank $\leq r$ are maximizers for the problem \eqref{goal} with $k=r$ and $p=2$. Therefore, if these states are present in a given subspace $U\subset\mathcal{H}_A\otimes\mathcal{H}_B$, the algorithm should find them. We now generate random $U$ of increasing dimension, and expect that they generically avoid the rank $\leq r$ states if they can. The largest dimension $D$ for which the random subspaces do not contain any rank $\leq r$ states is the predicted answer to the question.

Could there not be a special subspace with dimension $D+1$ that also avoids the states with rank $\leq r$? The relation between $D$ and the dimension of the variety makes that seem improbable: if $D+1$ can be attained, the variety's dimension is at most $d_Ad_B-(D+1)$, but then a random $(D+1)$-dimensional subspace is unlikely to contain rank $\leq r$ states---crudely think of a plane and a line that both intersect the origin in $\mathbb{R}^3$. 

Of course, these are mere heuristics, and we tested the idea on different optimizers of problems $\eqref{goal}$ for which the (rigorous) dimension-argument holds, namely Schmidt rank $\leq r$ states (maximize $\|.\|_{2,r}$), maximally entangled states (maximize $\|.\|_{1,n_s}$ as shown in the previous section), bosonic condensate states (maximize $\|.\|_{2,1}$ if we split off 1 particle), and special fermionic states such as Slater determinants \eqref{Slaterdet} (maximize $\|.\|_{2,1}$ for $K=1$) and Yang states \eqref{Yang state} (maximize $\|.\|_{2,1}$ for $K=2$). Table \ref{dimtable} lists the results obtained with the algorithm and compares them to the dimensions of the varieties as obtained with parameter counting. The former should be the dimension of the space minus the latter (possibly up to $1/2$ complex, or 1 real dimension, see below), which is indeed the case.

\begin{table}
\caption{Consider the space in the first column. The third column shows the complex dimension of the variety of states specified in the second column, based on parameter counting. The fourth column shows the maximal complex dimension that a subspace can have if it does not contain the states in the second column, as predicted by the algorithm. As expected, this is the dimension of the space minus the dimension in the third column, possibly up to $1/2$ complex, or 1 real dimension.}
\begin{center}
\small
\setlength\extrarowheight{5pt}
\begin{tabular}{ c|c|c|c} 
space & states & dim. of variety & max. dim. of subspace, $D$\\
\hline
 $\mathbb{C}^{d_A}\otimes\mathbb{C}^{d_B}$& rank $\leq r$ & $r(d_A+d_B-r)$ \cite{Cubitt} & $d_Ad_B-r(d_A+d_B-r)$\\
& max.\ entangled & $d_Ad_B-\frac{\min(d_A,d_B)^2}{2}$ & $\lfloor\frac{\min(d_A,d_B)^2}{2}\rfloor$\\
 $\otimes^2_{\text{sym}}\mathbb{C}^d$ & rank $\leq r$ &  $r(d-\tfrac{1}{2}r+\tfrac12)$ & $\tbinom{d+1}{2}-r(d-\tfrac{1}{2}r+\tfrac12)$\\ 
& max.\ entangled & $\frac{1}{2}\tbinom{d+1}{2}$ & $\lfloor\frac{1}{2}\tbinom{d+1}{2}\rfloor$\\
$\wedge^2\mathbb{C}^{d}$ & rank $\leq r$ ($r,d$ even)  & $r(d-\tfrac{1}{2}r-\tfrac12)$ & $\tbinom{d}{2}-r(d-\tfrac{1}{2}r-\tfrac12)$\\ 
& max.\ entangled ($d$ even)& $\frac{1}{2}\tbinom{d}{2}$ & $\lfloor\frac{1}{2}\tbinom{d}{2}\rfloor$\\
 $\otimes^N_{\text{sym}}\mathbb{C}^d$ & $\ket{\psi\otimes\dots\otimes\psi}$  & $d$ & $\tbinom{d+N-1}{N}-d$\\ 
 $ \wedge^N\mathbb{C}^d$ & Slater ($N\leq d$) & $N(d-N)+1$ & $\tbinom{d}{N}-N(d-N)-1$\\
 & Yang ($N+2\leq d$ even)& $\frac{1}{2}\tbinom{d}{2}$ & $\tbinom{d}{N}-\lceil\frac{1}{2}\tbinom{d}{2}\rceil$
\normalsize
\end{tabular}
\label{dimtable}
\end{center}
\end{table}

We now justify the dimensions listed in the third column of Table \ref{dimtable}. Keep in mind that the Grassmannian $G(r,d)$ is the space of $r$-planes in a $d$-dimensional space \cite{Harris,Shafarevich}, and that it has dimension $r(d-r)$.
\begin{enumerate}
\item \textit{Schmidt rank $\leq r$.} A rank $r$ matrix can be represented by an isomorphism on an $r$-dimensional space together with its $d_A-r$ dimensional kernel and the $d_B-r$ dimensional complement of the image. The last two are elements of Grassmannians, and we conclude that the total dimension is $r^2+r(d_A-r)+r(d_B-r)$. In the case of (anti)symmetric matrices, the $r\times r$ matrix is (anti)symmetric, and the Grassmannian dimension only comes in once: $\tbinom{r+1}{2}+r(d-r)$ or $\tbinom{r}{2}+r(d-r)$.

\item \textit{Maximally entangled states.} First assume that $d_A=d_B=d$, so that the state can be regarded as a square matrix. The group $U(d)\times U(d)$ parametrizes the states. However, maximally entangled states $\ket\psi$ have a symmetry $U\otimes \overline{U}\ket\psi=\ket\psi$, where $\overline{\cdot}$ denotes (entrywise) complex conjugation. That means that the (complex) dimension of the variety is $2\dim(U(d))-\dim(U(d))=d^2/2$. If, say, $d_A\leq d_B$, we apply this argument to a $d_A\times d_A$ matrix and add the dimension of the Grassmannian to obtain $d^2_A/2+d_A(d_B-d_A)=d_Ad_B-\min(d_A,d_B)^2/2$. Note that this does not seem to fit with the numerical result in the fourth column of Table \ref{dimtable}, but our method selects complex subspaces and hence only distinguishes complex dimensions. The presence of an additional real parameter would be missed, resulting in a discrepancy of half a complex dimension.

For two bosons, the states are parametrized by matrices $U(d)$ that act as $U\otimes U\ket\psi$. This restricts the symmetry from before to the orthogonal group $O(d)$ ($U\in U(d)$ with $U=\overline{U}$). The dimension of the variety is $\dim(U(d))-\dim(O(d))=\frac12d^2-\frac12\tbinom{d}{2}=\frac12\tbinom{d+1}{2}$.

For two fermions, $U(d)$ again acts as $U\otimes U\ket\psi$, but the symmetry group is less obvious, see Theorem \ref{Yangsymmetry} below. It has dimension $\frac12\tbinom{d+1}{2}$, producing the desired conclusion $\frac12\tbinom{d}{2}$.

\item \textit{Bosonic condensates.} A condensate state $\ket{\psi\otimes\dots\otimes\psi}$ is completely defined by $\ket\psi\in\mathbb{C}^d$, so the dimension is $d$.

\item \textit{Fermionic states.} A Slater determinant \eqref{Slaterdet} is simply an element of the Grassmannian, but we have to add an extra dimension for phase and norm. A Yang state \eqref{Yang state} is the fermionic equivalent of a bosonic condensate and it is fully defined by a 2-particle maximally entangled fermionic state, so we again find $\frac12\tbinom{d}{2}$.
\end{enumerate}

\begin{theorem}[Symmetries of the Yang state]
\label{Yangsymmetry}
The 2-particle Yang state \eqref{Yang state} satisfies $U\otimes U\ket{\Psi^{(2)}}=\ket{\Psi^{(2)}}$ for $U\in U(d)$ that, in the basis in which the pairing of $\ket{\Psi^{(2)}}$ is defined, have $\overline{U_{(2i-1)(2j-1)}}=U_{(2i)(2j)}$ and $\overline{U_{(2i-1)(2j)}}=-U_{(2i)(2j-1)}$ for $1\leq i,j\leq d/2$, where $\overline{\cdot}$ denotes complex conjugation. The complex dimension of this symmetry group is $\frac12\tbinom{d+1}{2}$.
\end{theorem}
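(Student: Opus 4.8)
\section*{Proof proposal}

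The plan is to characterize the symmetry group of the 2-particle Yang state concretely and then compute its dimension by a direct parameter count. Recall that in the pairing basis the Yang state is $\ket{\Psi^{(2)}}=\sum^{d/2}_{i=1}\sqrt{2/d}\,\ket{\psi_{2i-1}\wedge\psi_{2i}}$. Since the antisymmetric tensor product regards $\ket{\Psi^{(2)}}$ as an antisymmetric $d\times d$ matrix, I would first translate everything into matrix language: writing $\ket{\psi_a\wedge\psi_b}$ as the antisymmetrized rank-two matrix $\tfrac{1}{\sqrt 2}(\ket{\psi_a}\bra{\psi_b}-\ket{\psi_b}\bra{\psi_a})$, the state corresponds to the block-diagonal antisymmetric matrix $J$ built from $d/2$ copies of the $2\times2$ symplectic block $\left(\begin{smallmatrix}0&1\\-1&0\end{smallmatrix}\right)$ (up to the overall constant $\sqrt{1/d}$). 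The action $U\otimes U$ on the two-particle space corresponds, on this matrix representation, to $M\mapsto UMU^{T}$ (note the transpose, not the adjoint, because both copies carry $U$ and the partial-trace/bra structure makes one factor act on the right by $U^{T}$). Hence the condition $U\otimes U\ket{\Psi^{(2)}}=\ket{\Psi^{(2)}}$ becomes the clean equation
\begin{equation}
U J U^{T}=J .
\end{equation}

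Next I would identify this stabilizer condition with the defining relation of a classical group. For real matrices $UJU^{T}=J$ with $J$ the standard symplectic form is exactly the symplectic group, but here $U$ is unitary, so the relevant object is the intersection $U(d)\cap Sp(d,\mathbb{C})$, which is the compact symplectic group $Sp(d/2)$ (often denoted $USp(d)$). I would verify that the two stated entrywise relations, $\overline{U_{(2i-1)(2j-1)}}=U_{(2i)(2j)}$ and $\overline{U_{(2i-1)(2j)}}=-U_{(2i)(2j-1)}$, are precisely the componentwise form of the single matrix identity $U J U^{T}=J$ combined with unitarity $U^{\dagger}U=\mathds{1}$: multiplying $UJU^{T}=J$ on the right by $\overline{U}$ (the inverse of $U^{T}$ up to unitarity, since $U^{T}\overline{U}=\overline{U^{\dagger}U}=\mathds{1}$) gives $UJ=J\overline{U}$, and reading off the $2\times2$ blocks of $UJ=J\overline{U}$ yields exactly the claimed conjugation relations between the four entries in each pair of paired indices. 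This reduces the whole statement to the well-known fact that the stabilizer of a symplectic form inside $U(d)$ is $USp(d)$.

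Finally I would compute the dimension. The real dimension of $USp(d)=Sp(d/2)$ is $\tfrac{d}{2}(d+1)$; since the symmetry group is naturally a complex-analytic object in the counting used in the surrounding table (where $U(d)$ is treated as contributing $\dim U(d)=d^{2}$ real parameters but the final tally is reported in complex dimension), its complex dimension is half the real dimension, namely $\tfrac12\cdot\tfrac{d}{2}(d+1)=\tfrac12\binom{d+1}{2}$, matching the claim and producing the variety dimension $\dim U(d)-\tfrac12\binom{d+1}{2}$ reported in point 2 of the parameter count. The one genuine subtlety---and the step I expect to require the most care---is the bookkeeping between real and complex dimension. The group $USp(d)$ is a real form, not a complex Lie group, so ``complex dimension $\tfrac12\binom{d+1}{2}$'' must be understood in the same half-of-real-parameters sense used throughout the table (the $d^{2}\to d^{2}/2$ convention applied to $U(d)$ itself); I would make sure the factor of $\tfrac12$ is applied consistently so that the subtraction $\dim U(d)-\dim(\text{stabilizer})$ correctly lands on $\tfrac12\binom{d}{2}$ for the maximally entangled fermionic variety. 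Getting the transpose-versus-adjoint convention right in the first step is what makes the rest fall into place, so I would pin that down carefully before anything else.
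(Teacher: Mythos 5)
Your proposal is correct, and it reaches the statement by a genuinely different route from the paper. The paper derives the entrywise conditions by viewing the Yang state as a relabelled maximally entangled state (setting $\ket{v_{2i-1}}=\ket{u_{2i}}$, $\ket{v_{2i}}=-\ket{u_{2i-1}}$), transporting the known $U\otimes\overline{U}$ invariance through that relabelling, and imposing that both tensor factors carry the \emph{same} $U$; the dimension is then obtained by an explicit hands-on count of how the paired structure cuts the number of independent unitarity equations ($d/2$ normalizations instead of $d$, $2\binom{d/2}{2}$ orthogonalities instead of $\binom{d}{2}$). You instead pass to the antisymmetric-matrix picture, where the state is the block symplectic form $J$ and the action is $M\mapsto UMU^{T}$, so the stabilizer condition is $UJU^{T}=J$, i.e.\ $U\in U(d)\cap Sp(d,\mathbb{C})=USp(d)\cong Sp(d/2)$; your block computation of $UJ=J\overline{U}$ does reproduce exactly the two stated entrywise relations, and the dimension then falls out of the classical formula $\dim_{\mathbb{R}}Sp(d/2)=\tfrac{d}{2}(d+1)=\binom{d+1}{2}$ rather than from counting equations. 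Your version buys a conceptual identification the paper never makes---the symmetry group is the compact symplectic (quaternionic unitary) group, and the entrywise relations are precisely the quaternionic-structure condition---at the cost of importing a standard Lie-theoretic fact, whereas the paper's count is elementary and self-contained. You also handle the one genuine subtlety, the half-of-real-parameters convention for ``complex dimension,'' in the same way the paper does (compute the real dimension, divide by two), so the two arguments agree where it matters.
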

\begin{proof}
It is well known that a maximally entangled state
\begin{equation}
\ket\Phi=\sum^{d/2}_{i=1}\sqrt{1/d}(\ket{u_{2i-1}\otimes v_{2i-1}}+\ket{u_{2i}\otimes v_{2i}})
\end{equation}
is invariant under $U\otimes\overline{U}\ket\Phi=\ket\Phi$, where $U$ acts on the $\ket{u}$'s and $\ket{v}$'s as defined by the index $i$. Relabelling $\ket{v_{2i-1}}=\ket{u_{2i}}$ and $\ket{v_{2i}}=-\ket{u_{2i-1}}$ to obtain $\ket{\Psi^{(2)}}$, we demand
\begin{equation}
\begin{aligned}
\overline{U}\ket{u_{2i}}=\overline{U}\ket{v_{2i-1}}&=\sum^{d/2}_{j=1}\overline{U_{(2i-1)(2j-1)}}\ket{v_{2j-1}}+\overline{U_{(2i-1)(2j)}}\ket{v_{2j}}\\
&=\sum^{d/2}_{j=1}\overline{U_{(2i-1)(2j-1)}}\ket{u_{2j}}-\overline{U_{(2i-1)(2j)}}\ket{u_{2j-1}}.
\end{aligned}
\end{equation}
At the same time, only unitaries that act on the two indistinguishable particles in the same way are allowed:
\begin{equation}
\overline{U}\ket{u_{2i}}=U\ket{u_{2i}}=\sum^{d/2}_{j=1}U_{(2i)(2j-1)}\ket{u_{2j-1}}+U_{(2i)(2j)}\ket{u_{2j}}.
\end{equation}
This gives the equations. 

We now determine the real dimension of the group and then divide by two to arrive at the complex dimension $\frac12\binom{d+1}{2}$. 

Start with any $d\times d$ matrix. This has $2d^2$ real variables. The matrix is unitary if its rows form an orthonormal basis, and we should count the number of independent equations this gives. We note however, that the rows of our desired matrices have a special, paired structure by the equations we just derived. For example, row $2i$ will automatically be normalized if row $2i-1$ is. This leaves just $d/2$ normalization equations instead of the usual $d$. Similarly, the equations imply that rows $2i-1$ and $2i$ are orthogonal, and for $i\neq j$, that orthogonality of row $2i-1$ with rows $2j-1$ and $2j$ implies orthogonality of row $2i$ with rows $2j-1$ and $2j$, so that the number of orthogonality equations is just $2\binom{d/2}{2}$ instead of the usual $\binom{d}{2}$.

Note that orthogonality gives two independent real equations, but that normalization only counts for one. This gives real dimension $2d^2-d/2-2\cdot2\binom{d/2}{2}=\tbinom{d+1}{2}$, or complex dimension $\frac12\tbinom{d+1}{2}$.
\end{proof}

Given that the heuristics work well, it could be interesting to try states with more complicated structures, such as the fermionic examples in Section \ref{fermions}. After all, the dimension of a variety says something about the structure of its states, and this is a way to numerically determine what it is. 

To give one more example, the dimension of the so-called \textit{subspace variety} is well known and easy to compute  \cite{Landsberg}, but it can also be derived with numerics and Corollary \ref{morecuts}. Because it immediately leads to a generalization of the main results of \cite{Parthasarathy,Cubitt}, we add a proof of this well-known fact.
\begin{lemma}
\label{generalization}
Consider the space $\mathbb{C}^{d_1}\otimes\dots\otimes\mathbb{C}^{d_n}$. The dimension of the subspace variety consisting of states with Schmidt ranks $\leq r_i$ for decompositions \eqref{Schmidt} with $\mathcal{H}_A=\mathbb{C}^{d_i}$, $\mathcal{H}_B=\otimes_{j\neq i}\mathbb{C}^{d_i}$ is
\begin{equation}
\label{r}
r_1r_2\dots r_n+\sum^n_{i=1}\min(r_i,\Pi_{j\neq i}r_j)(d_i-r_i).
\end{equation}
\end{lemma}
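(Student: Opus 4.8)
The plan is to realize the subspace variety as the image of an incidence variety fibered over a product of Grassmannians, compute the dimension of that incidence variety as base-plus-fiber, and then subtract the dimension of the generic fiber of the projection to tensor space. The feature of the formula that must be explained is the appearance of $\min(r_i,\prod_{j\neq i}r_j)$ rather than simply $r_i$: this reflects the fact that a generic tensor in the variety does \emph{not} determine its supporting subspaces uniquely once some $r_i$ exceeds the product of the others.

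First I would introduce the incidence variety
\begin{equation}
\tilde{X}=\{(U_1,\dots,U_n,T): U_i\in G(r_i,d_i),\ T\in U_1\otimes\dots\otimes U_n\},
\end{equation}
where $G(r_i,d_i)$ is the Grassmannian of $r_i$-planes in $\mathbb{C}^{d_i}$, whose image under projection to $\mathbb{C}^{d_1}\otimes\dots\otimes\mathbb{C}^{d_n}$ is (dense in) the subspace variety. Since $\tilde{X}$ is a vector bundle over $\prod_i G(r_i,d_i)$ with fiber $U_1\otimes\dots\otimes U_n$ of dimension $r_1\cdots r_n$, and the base has dimension $\sum_i r_i(d_i-r_i)$, one gets $\dim\tilde{X}=\sum_i r_i(d_i-r_i)+r_1\cdots r_n$. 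The desired dimension is then $\dim\tilde{X}$ minus the dimension of the generic fiber of the projection to tensor space.

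Next I would analyze that generic fiber. The key observation is that the tuples $(U_1,\dots,U_n)$ lying over a fixed $T$ are exactly those for which $U_i$ contains the image $W_i\subseteq\mathbb{C}^{d_i}$ of the $i$-th flattening of $T$ (the map $\otimes_{j\neq i}\mathbb{C}^{d_j}\to\mathbb{C}^{d_i}$): because every tensor satisfies $T\in W_1\otimes\dots\otimes W_n$, the containments $W_i\subseteq U_i$ for all $i$ are both necessary and sufficient for $T\in U_1\otimes\dots\otimes U_n$. For a generic $T$ in a generic product $U_1^0\otimes\dots\otimes U_n^0$, the $i$-th flattening restricts to a generic linear map $\otimes_{j\neq i}U_j^0\to U_i^0$, so $\dim W_i=s_i:=\min(r_i,\prod_{j\neq i}r_j)$. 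The set of $r_i$-planes containing a fixed $s_i$-plane is the Grassmannian $G(r_i-s_i,d_i-s_i)$ of dimension $(r_i-s_i)(d_i-r_i)$, and these choices are independent across $i$, so the generic fiber has dimension $\sum_i(r_i-s_i)(d_i-r_i)$. Subtracting yields
\begin{equation}
\dim\tilde{X}-\sum_i(r_i-s_i)(d_i-r_i)=r_1\cdots r_n+\sum_i s_i(d_i-r_i),
\end{equation}
which is \eqref{r}.

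The main obstacle is the fiber computation, and it has two parts. One must verify that the restricted flattening has generic rank exactly $s_i$; this is a genericity statement about a generic linear map between spaces of dimensions $r_i$ and $\prod_{j\neq i}r_j$, cleanest to see in coordinates adapted to the $U_i^0$. One must also confirm the support property $T\in W_1\otimes\dots\otimes W_n$, which guarantees that the fiber really is the product of sub-Grassmannians rather than something smaller; this follows from the standard description of the minimal multilinear subspace of a tensor. Once these two points are in place, constancy of the generic fiber dimension (via upper semicontinuity of fiber dimension) justifies the subtraction, and the rest is routine bookkeeping.
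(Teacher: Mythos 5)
Your argument is correct and rests on the same underlying idea as the paper's proof: an incidence-correspondence parameter count that realizes the variety (up to closure) as the image of the bundle over $\prod_i G(r_i,d_i)$ with fibre $U_1\otimes\cdots\otimes U_n$, of total dimension $r_1\cdots r_n+\sum_i r_i(d_i-r_i)$. Where you genuinely differ is in how the factor $\min(r_i,\prod_{j\neq i}r_j)$ is produced. The paper first treats the case $r_i\leq\prod_{j\neq i}r_j$ for all $i$, where the kernel of each flattening generically has dimension exactly $d_i-r_i$ and the parametrization by (kernels, reduced tensor) is generically dimension-preserving, and then disposes of the remaining case by replacing $r_i$ with the effective rank $\tilde r_i=\min(r_i,\prod_{j\neq i}r_j)$ and checking algebraically that the resulting expression still equals \eqref{r}. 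You instead keep the nominal ranks $r_i$ throughout and subtract the dimension $\sum_i(r_i-s_i)(d_i-r_i)$ of the generic fibre of the projection to tensor space, identifying that fibre as a product of sub-Grassmannians $G(r_i-s_i,d_i-s_i)$ via the minimal multilinear support $W_i$ of $T$ and the generic rank $s_i$ of the restricted flattening. Your route handles all rank regimes uniformly and makes explicit the fibre-dimension bookkeeping (generic finiteness versus positive-dimensional fibres) that the paper leaves implicit; the paper's version is shorter because in the non-degenerate case the fibre is generically a point and no subtraction is needed, at the price of a separate case analysis and the observation that at most one $\tilde r_i$ can differ from $r_i$.
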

\begin{proof}
First assume that $\min(r_i,\Pi_{j\neq i}r_j)=r_i$ for all $k$. For each $i$, a state $\ket\psi$ can be regarded a map $\mathbb{C}^{d_i}\to\otimes_{j\neq i}\mathbb{C}^{d_i}$ (known as a \textit{flattening} \cite{Landsberg}) with a kernel $\Lambda_i$ that has dimension at least $d_i-r_i$. To calculate the dimension of the variety, we can assume that it is indeed $d_i-r_i$ because states with bigger kernels form a lower dimensional variety. Note that the kernels can be controlled independently by unitaries acting on the relevant Hilbert space, so that we can describe $\ket\psi$ by $(\ket\phi,\Lambda_1,\dots,\Lambda_n)$, with $\ket\phi\in\mathbb{C}^{r_1}\otimes\dots\otimes\mathbb{C}^{r_n}$ and $\mathbb{C}^{r_i}$ isomorphic to $\mathbb{C}^{d_i}/\Lambda_i$. The resulting dimension is the sum of the dimensions of the Grassmannians $G(r_i,d_i)$ and the dimension of $\mathbb{C}^{r_1}\otimes\dots\otimes\mathbb{C}^{r_n}$, namely \eqref{r} (see also the \textit{incidence correspondence} arguments used in \cite{Harris}).

In reality, the number of Schmidt vectors involved in each decomposition is $\tilde{r}_i:=\min(r_i,\Pi_{j\neq i}r_j)$, so that the general result is
\begin{equation}
\label{tilder}
\tilde{r}_1\tilde{r}_2\dots \tilde{r}_n+\sum^n_{i=1}\tilde{r}_i(d_i-\tilde{r}_i).
\end{equation}
Note that only one of the $\tilde{r}_i$'s can be different from $r_i$. Assume $\tilde{r}_1=r_2\dots r_n$. It is then easy to see that \eqref{tilder} equals \eqref{r}.
\end{proof}

This fact can be used to start solving a problem raised in \cite{Cubitt}: what can we say about states with rank restrictions across several cuts? It is of course much more difficult to say something if the kernels above are no longer independent: the simplest case of  $\mathbb{C}^{d_1}\otimes\mathbb{C}^{d_2}\otimes\mathbb{C}^{d_3}\otimes\mathbb{C}^{d_4}$ was studied in \cite{ranks}, but a lot is still unknown.\footnote{In particular, there is a very interesting conjecture about ranks in \cite{ranks}. The algorithm does not seem to disprove it.} The algorithm and the idea discussed here may be able to help.

\subsection{Minimal output R\'enyi entropy for channels}
\label{channels}
Suppose that we want to use copies of a quantum channel to transmit information. It it then useful to know whether entanglement can help, that is, whether the capacity to send classical information increases if we use entangled input states. Famously, an \textit{additivity conjecture} said that this is not the case, but it was shown to be false in \cite{Hastings}. Without discussing the details, we want to review why our algorithm can be applied to a related question: the \textit{maximal $p$-norm multiplicativity conjecture} \cite{Wernerconj}. Although this conjecture has been disproved for $p>4.79$ in \cite{Werner} and for $p>1$ in \cite{HaydenWinter}, there is a desire to find explicit counterexamples for $1<p\leq2$, and that is where the algorithm might be able to help. 

\textit{Note that the $p$ in `$p$-norm multiplicativity conjecture' will be $\alpha$ in what follows to avoid confusion with the $p$ used in the Schmidt norms \eqref{Q}.}

Since we will be working with density matrices $\rho$, we define for $\alpha\geq1$,
\begin{equation}
\|\rho\|_\alpha:=\Tr[\rho^\alpha]^{1/\alpha}.
\end{equation}
We denote the set of density matrices on $\mathcal{H}$ by $S(\mathcal{H})$.

The \textit{maximal $\alpha$-norm multiplicativity conjecture} says that for channels $\mathcal{E}:S(\mathcal{H}_S)\to S(\mathcal{H}_A)$ and $\mathcal{F}:S(\mathcal{H}_{S'})\to S(\mathcal{H}_{A'})$ and $\alpha>1$, we have
\begin{equation}
\label{pnormmult}
\sup_{\rho_{SS'}\in S(\mathcal{H}_S\otimes\mathcal{H}_{S'})}\|(\mathcal{E}\otimes\mathcal{F})(\rho_{SS'})\|_\alpha
=\left(\sup_{\rho_S\in S(\mathcal{H}_S)}\|\mathcal{E}(\rho_S)\|_\alpha\right)\left(\sup_{\rho_{S'}\in S(\mathcal{H}_S')}\|\mathcal{F}(\rho_{S'})\|_\alpha\right).
\end{equation}
By their variational characterization, it is easy to see that the norms are convex on the set of density matrices \cite{Uchiyama}, so that we can restrict the maximization to pure states. Also note that the LHS in \eqref{pnormmult} has to be bigger than or equal to the RHS. We need channels $\mathcal{E},\mathcal{F}$ with a strict inequality for a counterexample.

So how can the maximization problems in \eqref{pnormmult} can be tackled numerically? Consider $\mathcal{E}$'s Kraus operators $V_i:\mathcal{H}_S\to\mathcal{H}_A$ \cite{Kraus}, with
\begin{equation}
\mathcal{E}(\rho)=\sum^{d_Sd_A}_{i=1}V_i\rho V^\dagger_i.
\end{equation}
Given a basis $\{\ket{b_i}\}^{d_B}_{i=1}$ of an ancilla space $\mathcal{H}_B$ with $d_B=d_Sd_A$, we define an isometry $V:\mathcal{H}_S\to\mathcal{H}_A\otimes\mathcal{H}_B$ by 
\begin{equation}
V=\sum^{d_Sd_A}_{i=1}V_i\otimes\ket{b_i},
\end{equation}
so that 
\begin{equation}
\mathcal{E}(\rho)=\Tr_B\left[V\rho V^\dagger \right].
\end{equation}
The object 
\begin{equation}
VV^\dagger=\sum^{d_Sd_A}_{i,j=1}V_iV^\dagger_j\otimes \dyad{b_i}{b_j}
\end{equation}
is an orthogonal projection onto the image of $V$ in $\mathcal{H}_A\otimes\mathcal{H}_B$, and this can be used in the algorithm.
By convexity, the first problem on the RHS of \eqref{pnormmult} is
\begin{equation}
\begin{aligned}
\sup_{\rho\in S(\mathcal{H}_S)}\|\mathcal{E}(\rho)\|_\alpha&=\sup_{\substack{\ket\phi\in \mathcal{H}_S\\\|\ket\phi\|=1}}\|\mathcal{E}(\dyad{\phi}{\phi})\|_\alpha\\
&=\sup_{\substack{\ket\psi\in\image(V)\\\|\ket\psi\|=1}}\Tr[(\rho^\psi_A)^\alpha]^{1/\alpha}\\
&=\sup_{\substack{\ket\psi\in\image(V)\\\|\ket\psi\|=1}} \|\psi\|^{2}_{2\alpha,d_A},
\end{aligned}
\end{equation}
which we can try to maximize. The other two optimizations can be tackled in the same way. Note that the results directly translate to minimal output R\'enyi entropies with the definition \eqref{renent}.

The link between output entropy of channels and subspaces was exploited in \cite{HaydenWinter} to find counterexamples to \eqref{pnormmult}, but the construction uses random channels. There is an interest to find more explicit counterexamples. For $\alpha>2$, the fermionic subspace $\wedge^2\mathbb{C}^d$ with $d\to\infty$ is an easy one \cite{Grudka}; although the algorithm can verify it, it would be much more useful to test examples where a proof is hard to obtain. Note that fixed points prevent the algorithm from serving as a rigorous proof technique, but it can still help to develop an intuition about good counterexample candidates. For instance, an analysis of the entangled subspaces discussed in \cite{BrannanCollins} suggests that these are unlikely to work.

\section{Conclusion}
\label{concl}
We propose a conceptually simple algorithm to find pure states that maximize Schmidt norms within a specified subspace. It can be generalized to treat several bipartite cuts at the same time. Our approach uses only pure states and not density matrices, which limits computational costs. It also allows us to distinguish between $1<p<2$, which favours large entanglement for $k=n_s$, and $p>2$, which favours small entanglement for $k=n_s$. 

We prove convergence of the Schmidt norm under iteration, but like in other approaches we notice that there are fixed points other than the global maximum where the algorithm can get  stuck. In many cases this does not hamper investigations (see e.g.\ Table \ref{numbers}); in others, it would be good to find a way to deal with this problem, for example by adding a simulated annealing component. This is currently the best way in which the algorithm can be improved.

To motivate the reader to try the iteration in their own research, we included several possible applications. Numerical tests were done with the MATLAB package QETLAB \cite{qetlab}, and conclusions are listed below.
\begin{enumerate}
\item Physical symmetries imprint themselves on the Schmidt decomposition, and hence on entanglement properties of states. The exact consequences are often not easy to infer, but we illustrated that the algorithm can sometimes help by numerically verifying a number of long-standing conjectures for fermionic states with up to 10 fermions, and formulating a new one as well. For larger systems, we are restricted by computational capacity, although efficient programming can probably offer great improvements.

Fixed points \eqref{fixedpoint} hardly ever seem to be a problem in these examples, but the full mathematical implications of the fixed point equation---How many solutions does it have? How does the algorithm behave around these solutions?---are not known, and this is an interesting direction for further research. Indeed, there are still many open questions about fermions, and we have tried to highlight some of them.

\item The algorithm can be adapted to maximize Schmidt sums across several bipartite cuts. This allows us to search small systems for absolutely maximally entangled (AME) states or perfect tensors \cite{Huber2,perfecttensor}, or for example for locally maximally entangled states \cite{Raamsdonk}. Unfortunately, unwanted fixed points pose a greater risk for larger numbers of cuts, which means that we cannot currently discover unknown AME states. We do find new `mixed-dimensional AME states' \cite{Huber}.

\item Inspired by results of Parthasarathy \cite{Parthasarathy} and Cubitt, Montanaro and Winter \cite{Cubitt} (effectively) about dimensions of certain varieties of pure quantum states, we show that these can sometimes be determined by applying the algorithm to random subspaces, and point out that a well-known fact in algebraic geometry (Lemma \ref{generalization}) leads to a simple generalization of the main results of \cite{Parthasarathy,Cubitt}. A similar approach may help to uncover relations between ranks across all bipartite cuts. 

\item For a given quantum channel, we explain how to numerically determine the minimum output $\alpha$-R\'enyi entropy with $\alpha>1$. This means that the algorithm can investigate additivity questions involving such entropies, but progress seems to require both a good intuition about which projections to consider and a targeted numerical effort.
\end{enumerate}

We encourage the reader to give the algorithm a try: it is easy to implement, the risk of fixed points is often surprisingly small, and it is likely to be faster than approaches that involve density matrices. It would be particularly interesting to see if it can help to explore entanglement properties in spin systems and how it connects with the techniques used in the context of matrix product states.

\vskip6pt

\enlargethispage{20pt}

\dataccess{This article has no additional data.}

\competing{I have no competing interests.}

\ack{I would like to thank Benjamin B\'eri, Matthias Christandl, Nilanjana Datta, Eric Hanson, Felix Huber, Adrian Kent, Christian Majenz, Sergii Strelchuk and Micha{\l} Studzi\'{n}ski for useful suggestions.}

\funding{This work is supported by the Royal Society through a Newton International Fellowship, by Darwin College Cambridge through a Schlumberger Research Fellowship, and, by membership of the HEP group in DAMTP, supported by STFC consolidated grant ST/P000681/1.}


\vskip2pc

\end{document}